\def\maxwidth{ %
  \ifdim\Gin@nat@width>\linewidth
    \linewidth
  \else
    \Gin@nat@width
  \fi
}
\definecolor{fgcolor}{rgb}{0.345, 0.345, 0.345}
\definecolor{shadecolor}{rgb}{.97, .97, .97}
\definecolor{messagecolor}{rgb}{0, 0, 0}
\definecolor{warningcolor}{rgb}{1, 0, 1}
\definecolor{errorcolor}{rgb}{1, 0, 0}
\DeclarePairedDelimiter{\ceil}{\lceil}{\rceil}
\numberwithin{equation}{section}
\theoremstyle{plain}
\theoremstyle{plain}
\long\def\comment#1{}
\newtheorem{algorithm}{Algorithm}
\newtheorem{theorem}{Theorem}
\newtheorem{lemma}{Lemma}
\theoremstyle{definition}
\numberwithin{definition}{section}
\newtheorem{remark}{Comment}[section]
\numberwithin{remark}{section}
\DeclareMathOperator{\diag}{diag}
\newcommand{\R}{\mathbb{R}}
\newcommand{\N}{\mathbb{N}}
\newcommand{\bP}{\mathbb{P}}
\newcommand{\E}{\mathbb{E}}
\newcommand{\setword}[2]{%
  \phantomsection
  #1\def\@currentlabel{\unexpanded{#1}}\label{#2}%
}
\begin{document}

\begin{frontmatter}
\title{Uniform Inference in High-Dimensional Gaussian Graphical Models\thanksref{T1}}
\runtitle{Inference in Gaussian Graphical Models}
\thankstext{T1}{Version November 2018.}

\begin{aug}
\author{\fnms{Sven} \snm{Klaassen}\ead[label=e1]{}},
\author{\fnms{Jannis} \snm{K\"uck}\ead[label=e2]{}},\\
\author{\fnms{Martin} \snm{Spindler}\ead[label=e3]{}}
\and
\author{\fnms{Victor} \snm{Chernozhukov}\ead[label=e4]{}}


\address{Sven Klaassen\\
University of Hamburg\\
Hamburg Business School\\
Moorweidenstr. 18\\
20148 Hamburg\\
Germany\\
E-mail: sven.klaassen@uni-hamburg.de}

\address{Jannis K\"uck\\
University of Hamburg\\
Hamburg Business School\\
Moorweidenstr. 18\\
20148 Hamburg\\
Germany\\
E-mail: jannis.kueck@uni-hamburg.de}

\address{Martin Spindler\\
University of Hamburg\\
Hamburg Business School\\
Moorweidenstr. 18\\
20148 Hamburg\\
Germany\\
E-mail: martin.spindler@uni-hamburg.de}

\address{Victor Chernozhukov\\
Massachusetts Institute of Technology\\
Department of Economics and\\ Operations Research Center\\
50 Memorial Drive\\
Cambridge, MA 02142\\
USA\\
E-mail: vchern@mit.edu}
\end{aug}

\begin{abstract}
Graphical models have become a very popular tool for representing dependencies within a large set of variables and are key for representing causal structures.  
We provide results for uniform inference on high-dimensional graphical models with the number of target parameters $d$ being possible much larger than sample size. This is in particular important when certain features or structures of a causal model should be recovered. Our results highlight how in high-dimensional settings graphical models can be estimated and recovered with modern machine learning methods in complex data sets. To construct simultaneous confidence regions on many target parameters, sufficiently fast estimation rates of the nuisance functions are crucial. In this context, we establish uniform estimation rates and sparsity guarantees of the square-root estimator in a random design under approximate sparsity conditions that might be of independent interest for related problems in high-dimensions. We also demonstrate in a comprehensive simulation study that our procedure has good small sample properties.
\end{abstract}
\begin{keyword}[class=MSC]
\kwd[Primary ]{60J05}
\kwd{60J07}
\kwd{41A25}
\kwd{49M15}
\end{keyword}

\begin{keyword}
\kwd{Gaussian Graphical Models}
\kwd{conditional independence}
\kwd{Square-Root Lasso}
\kwd{Post-selection Inference}
\kwd{High-dimensional Setting}
\kwd{Z-estimation}
\end{keyword}

\end{frontmatter}
\section{Introduction}
We provide methodology and theory for uniform inference on high-dimensional graphical models with the number of target parameters being possible much larger than sample size. We demonstrate uniform asymptotic normality of the proposed estimator over $d$-dimensional rectangles and construct simultaneous confidence bands on all of the $d$ target parameters. The proposed method can be applied to test simultaneously the presence of a large set of edges in the graphical model
$$X=(X_{1},\dots,X_{p})^T\sim\mathcal{N}(\mu_X,\Sigma_X).$$
Assuming that the covariance matrix $\Sigma_X$ is nonsingular, the conditional independence structure
of the distribution can be conveniently represented by a graph $G = (V,E)$, where $V =\{1, \dots , p\}$ is the set of nodes and $E$ the set of edges
in $V \times V$. Every pair of variables not contained in the edge set is conditionally independent given all remaining variables. If the vector $X$ is normally distributed, every edge corresponds to a non-zero entry in the inverse covariance matrix (Lauritzen (1996)) \cite{lauritzen1996graphical}.\\ \\
In the last decade, significant progress has been made on estimation of a large precision matrix in order to analyze the dependence structure of a high-dimensional normal distributed random variable. There are mainly two common approaches to estimate the entries of a precision matrix. The first approach is a penalized likelihood estimation approach with a lasso-type penalty on entries of the precision matrix, typically referred to as the graphical lasso. This approach has been studied in several papers, see e.g Lam and Fan (2009) \cite{lam2009sparsistency}, Rothman et al. (2008) \cite{rothman2008sparse}, Ravikumar et al. (2011) \cite{ravikumar2011high} and  Yuan and Lin (2007) \cite{yuan2007model}.
The second approach, first introduced by Meinshausen and B\"uhlmann (2006) \cite{meinshausen2006high}, is neighborhood based. It estimates the conditional independence restrictions separately for each node in the graph and is hence equivalent to variable selection for Gaussian linear models. The idea of estimating the precision matrix column by column by running a regression for each variable against the rest of variables was further studied in Yuan (2010) \cite{yuan2010high}, Cai, Liu and Zhou (2011) \cite{cai2011constrained} and Sun and Zhang (2013) \cite{sun2013sparse}.
\\ \\
In this paper, we do not aim to estimate the whole precision matrix but we focus on quantifying the uncertainty of recovering its support by providing a significance test for a set of potential edges. In recent years, statistical inference for the precision matrix in high-dimensional settings has been studied, e.g in Jankov\'{a} and van de Geer (2016) \cite{jankova2017honest} and Ren et al. (2015) \cite{ren2015asymptotic}. Both approaches lead to an estimate that is elementwise asymptotically normal and enables testing for low-dimensional parameters of the precision matrix using standard procedures such as Bonferroni-Holm correction.\\
In contrast to these existing results, our method explicitly allows for testing a joint hypothesis without correction for multiple testing and conducting inference for a growing number of parameters using high dimensional central limit results. In particular, our results rely on approximate sparsity instead of row sparsity which restricts the number of non-zero entries of each row of the precision matrix to be at most $s\ll n$ that is in many applications a questionable assumption.  In order to provide theoretical results, fitting the problem of support discovery in Gaussian graphical models into a general Z-estimation setting with a high-dimensional nuisance function is key. Inference on a (multivariate) target parameter in general Z-estimation problems in high dimensions is covered in Belloni et al. (2014) \cite{belloni2014uniform}, Belloni et al. (2018) \cite{belloni2018uniformly} and Chernozhukov et al. (2017) \cite{chernozhukov2017double}. To conduct inference on a high-dimensional target parameter, uniform estimation rates and sparsity guarantees of the nuisance function are crucial. In this context, we formally apply recent results from Belloni et al. (2018) \cite{belloni2018uniformly} to ensure sufficient fast convergence rate of the lasso estimator under approximate sparsity conditions. Moreover, we provide auxiliary results for the square-lasso estimator establishing uniform estimation rates and sparsity guarantees in a random design under approximate sparsity conditions that might be of independent interest for related problems in high-dimensional linear models.
\section*{Plan of this Paper}
The rest of this paper is organized as follows. In Section \ref{Setting}, we formally define the setting and introduce the notation that will be used fitting
the problem of support discovery in Gaussian graphical models into a general Z-estimation problem with a high-dimensional nuisance function. In Section \ref{estimation}, we outline the estimation procedure of the high-dimensional target parameter and the conditions that are needed to achieve our main theorem presented in Section \ref{mainsec}. Section \ref{implementation} provides implementation details and shows how our estimation procedure can be modified by cross-fitting to improve small sample properties. Section \ref{simulation} provides a simulation study on the proposed method. The supplementary material includes additional technical material. The proof of our main theorem is provided in Appendix \ref{appendixmain}. The uniform nuisance function estimation is discussed in Appendix \ref{uniformestimation}. Appendix \ref{uniformla} formally discusses conditions for the uniform convergence rates of the lasso estimator. Finally, Appendix \ref{uniformsq} provides auxiliary results for the square-lasso estimator.
\section{Setting} \label{Setting}
Let 
$$X=(X_{1},\dots,X_{p})^T\sim\mathcal{N}(\mu_X,\Sigma_X)$$
be a $p$-dimensional random variable. For all $(j,k)\in E$ with $j\neq k$, assume that
$$X_j=\sum\limits_{\substack{l =1\\ l\neq j}}^p \beta_{l}^{(j)}X_l+\varepsilon^{(j)}=\beta^{(j)} X_{-j}+\varepsilon^{(j)}$$
and
$$X_k=\gamma^{(j,k)}X_{-\{j,k\}}+\nu^{(j,k)},$$
where $\E [\varepsilon^{(j)}|X_{-j}]=0$ and $\E [X_{-\{j,k\}}\nu^{(j,k)}]=0$. Define the column vector
$$\Gamma^{(j)}=\left(-\beta^{(j)}_1,\dots,-\beta^{(j)}_{j-1},1,-\beta^{(j)}_{j+1},\dots,-\beta^{(j)}_{p}\right)^T.$$
One may show
\begin{align*}
\Phi_0=\left(\Phi_0^1,\dots,\Phi_0^p\right)=\left(\Gamma^{(1)}/Var(\varepsilon^{(1)}),\dots,\Gamma^{(p)}/Var(\varepsilon^{(p)})\right),
\end{align*}
where $\Phi_0^{j}$ is the $j$-th column of the precision matrix $\Phi_0=\Sigma_X^{-1}$ \cite{jankova2017honest}.
Hence
\begin{align}\label{beta0}
\beta^{(j)}_{k}=0\Leftrightarrow \beta^{(k)}_{j}=0\Leftrightarrow X_j\perp X_k|X_{-\{j,k\}}
\end{align}
for all $j\neq k$. Assume that we are interested in the following set of potential edges
$$\mathcal{M}:=\{m_1,\dots,m_{d_n}\}$$
where the number of edges $d_n$ may increase with sample size $n$. In the following the dependence on $n$ is omitted to simplify the notation.
In order to test whether all variables $X_j$ and $X_k$ are conditionally independent with $m_r=(j_r,k_r)$ for a $r\in\{1,\dots,d\}$, we have to estimate our target parameter
$$\theta_0=(\theta_{m_{1}},\dots,\theta_{m_{d}})^T:=(\beta^{(j_1)}_{k_1},\dots,\beta^{(j_d)}_{k_d})^T.$$
The setting above fits in the general Z-estimation problem of the form
$$\E \left[\psi_{m_r}\big(X,\theta_{m_r},\eta_{m_r}\big)\right]=0$$
for all $r=1,\dots,d$ with nuisance parameters $$\eta_{m_r}=\left(\beta^{(j)}_{-k},\gamma^{(j,k)}\right)$$ where $\beta^{(j)}_{-k}\equiv\beta^{(m_r)}$ and $\gamma^{(j,k)}\equiv\gamma^{(m_r)}$. The score functions are defined by 
\begin{align*}
\psi_{m_r}(X,\theta,\eta):&=\Big(X_j-\theta X_k-\eta^{(1)}X_{-m_r}\Big)\Big(X_k-\eta^{(2)}X_{-m_r}\Big)
\end{align*}
for $m_r=(j_r,k_r)\equiv (j,k)$, $\eta=(\eta^{(1)},\eta^{(2)})$ and $r=1,\dots,d$. Without loss of generality we assume $j>k$ for all tuples $m_r\in \mathcal{M}$.
\begin{remark}\label{linear}
The score function $\psi$ is linear, meaning 
\begin{align*}
\psi_{m_r}(X,\theta,\eta)=\psi_{m_r}^{a}(X,\eta^{(2)})\theta+\psi_{m_r}^b(X,\eta)
\end{align*}
with
$$\psi^{a}_{m_r}(X,\eta^{(2)})=-X_k\Big(X_k-\eta^{(2)}X_{-m_r}\Big)$$
and
$$\psi^{b}_{m_r}(X,\eta)=\Big(X_j-\eta^{(1)}X_{-m_r}\Big)\Big(X_k-\eta^{(2)}X_{-m_r}\Big)$$
for $m_r=(j,k)$ and $r=1,\dots,d$.\\ \\
It is well known that in partially linear regression models $\theta_0$ satisfies the moment condition
\begin{align}\label{moment condition}
\E \left[\psi_{m_r}\big(X,\theta_{m_r},\eta_{m_r}\big)\right]=0
\end{align}
for all $r=1,\dots,d$ and also the \textit{Neyman orthogonality} condition
\begin{align*}
\partial_{t}\left\{\E\left[\psi_{m_r}\big(X,\theta_{m_r},\eta_{m_r}+t\tilde{\eta}\big)\right]\right\}\big|_{t=0}
\end{align*}
for all $\tilde{\eta}$ in an appropriate set where $\partial_{t}$ denotes the derivate with respect to $t$. These properties are crucial for valid inference in high dimensional settings. We will show these properties explicitly in the proof of Theorem \ref{maintheo}.
\end{remark}
\section{Estimation}\label{estimation}
Let $X^{(i)}$, $i=1,\dots,n$ be i.i.d. random vectors.\\
At first we estimate the nuisance parameter $\eta_{m_r}=\big(\eta_{m_r}^{(1)},\eta_{m_r}^{(2)}\big)$ by
running a lasso/ post-lasso/ square-root lasso regression of $X_j$ on $X_{-j}$ to compute $(\tilde{\theta}_{m_r},\hat{\eta}_{m_r}^{(1)})$ and a lasso/ post-lasso/ square-root lasso regression of $X_k$ on $X_{-m_r}$ to compute $\hat{\eta}_{m_r}^{(2)}$ for each $(j,k)=m_r\in\mathcal{M}$. The estimator $\hat{\theta}_0$ of the target parameter
$$\theta_0=(\theta_{m_1},\dots,\theta_{m_{d}})^T$$
is defined as the solution of
\begin{align}\label{estimator}
\sup\limits_{r=1,\dots,d}\left\{\left|\mathbb{E}_n^{}\Big[\psi_{m_r}\big(X,\hat{\theta}_{m_r},\hat{\eta}_{m_r}\big)\Big]\right|-\inf_{\theta\in\Theta_{m_r}}\left|\mathbb{E}_n^{}\Big[\psi_{m_r}\big(X,\theta,\hat{\eta}_{m_r}\big)\Big]\right|\right\}\le\epsilon_{n},
\end{align}
where $\epsilon_{n}=o\left(\delta_nn^{-1/2}\right)$ is the numerical tolerance and $(\delta_n)_{n\ge 1}$ a sequence of positive constants converging to zero.\\ \\
Assumptions \textbf{A1}-\textbf{A4}.\\
Let $a_n:=\max(d,p,n,e)$ and $C$ a strictly positive constant independent of $n$ and $r$.
The following assumptions hold uniformly in $n\ge n_0,P\in\mathcal{P}_n$:
\begin{enumerate}[label=\textbf{A\arabic*},ref=A\arabic*]
\item\label{A1}
\begin{em}
For all $m_r=(j,k)\in \mathcal{M}$ with $j\neq k$ we have the following approximate sparse representations 
\begin{itemize}
\item[(i)] It holds
\begin{align*}
X_j&=\beta^{(j)} X_{-j}+\varepsilon^{(j)}\\
&=\theta_{m_r} X_{k}+\left(\beta^{(1,m_r)}+\beta^{(2,m_r)}\right)X_{-m_r}+\varepsilon^{(m_r)}
\end{align*}
with $$\|\beta^{(1,m_r)}\|_0\le s,\quad\max_{r=1,\dots,d}\|\beta^{(2,m_r)}\|_1^2\le C\sqrt{\frac{s^2\log(a_n)}{n}}$$
and $$\max_{r=1,\dots,d}\mathbb{E}\left[\left(\beta^{(2,m_r)}X_{-m_r}\right)^2\right]\le C\frac{s\log(a_n)}{n}.$$
\item[(ii)] It holds
\begin{align*}
X_k&=\gamma^{(j,k)}X_{-\{j,k\}}+\nu^{(j,k)}\\
&=\left(\gamma^{(1,m_r)}+\gamma^{(1,m_r)}\right)X_{-m_r}+\nu^{(m_r)}
\end{align*}
with $$\|\gamma^{(1,m_r)}\|_0\le s,\quad\max_{r=1,\dots,d}\|\gamma^{(2,m_r)}\|_1^2\le C\sqrt{\frac{s^2\log(a_n)}{n}}$$
and $$\max_{r=1,\dots,d}\mathbb{E}\left[\left(\gamma^{(2,m_r)}X_{-m_r}\right)^2\right]\le C\frac{s\log(a_n)}{n}.$$
\end{itemize}
\end{em}
\item\label{A2}
\begin{em}
There exist positive numbers $\tilde{q}>0$ and $\kappa<1$ such that the following growth conditions are fulfilled:
\begin{align*}
n^{\frac{1}{\tilde{q}}}\frac{s^2\log^4(a_n)}{n}=o(1)\text{,}\quad\log(d)=o\left(n^{\frac{1}{9}}\wedge n^{\frac{\kappa}{\tilde{q}}}\right).
\end{align*}
\end{em}
\item\label{A3}
\begin{em} 
For all $m_r=(j,k)\in \mathcal{M}$ it holds
$$\|\beta^{(m_r)}\|_2 + \|\gamma^{(m_r)}\|_2 \le C$$
and $$\sup\limits_{r=1,\dots,d}\sup\limits_{\theta\in\Theta_{m_r}}|\theta|\le C.$$
Additionally $\Theta_{m_r}$ contains a ball of radius $\log(\log(n))n^{-1/2}\log^{1/2}(d)\log(n)$ centered at $\theta_{m_r}$.
\end{em}
\item\label{A4}
\begin{em}
It holds
\begin{align*}
\inf\limits_{\|\xi\|_2=1} \E\left[(\xi X)^2\right]\ge c \text{ and } \sup\limits_{\|\xi\|_2=1} \E\left[(\xi X)^2\right]\le C.
\end{align*}
\end{em}
\end{enumerate}
The condition \ref{A1} is a standard approximate sparsity condition that is discussed in more detail in comment \ref{remarksparsity}. The number of relevant  variables $s_n\equiv s$ captured by the regression coefficient $\beta^{(1,m_r)}$ respectively $\gamma^{(1,m_r)}$ can grow with the sample size. The coefficient $\beta^{(2,m_r)}$ respectively $\gamma^{(2,m_r)}$ is the approximate sparse part of the true regression coefficient. This misspecification of a sparse model is controlled by condition \ref{A1}. The growth condition \ref{A2} ensures that $s^2\log^4(a_n)/n$ converges towards zero with at least polynomial speed. If this convergence is too slow ($\tilde{q}\ge 9$) the condition on the growth rate of the number of tested edges become more restrictive. In general, both the number of parameters $p$ and the number of relevant variables $s$ can grow with the sample size in a balanced way. If $s$ is fixed, the number of potential parameters
$p$ can grow at an exponential rate with the sample size. This means that the set of potential variables can be much larger than the sample size, only the number of relevant variables $s$ has to be smaller than the sample size. This situation is common for Lasso-based estimators. Condition \ref{A3} restricts the parameter spaces and ensures that the true coefficients are well behaved. The condition \ref{A4} is a standard eigenvalue condition that restricts the correlation between the components of $X$ and bounds the variances of each $X_j$ from below and above. Assumptions \ref{A1}-\ref{A4} combined with the normal distribution of $X$ imply the conditions \ref{tails}-\ref{growthc} from theorem \ref{uniformlasso} which enables us to estimate the nuisance parameter sufficiently fast by lasso and post-lasso. To ensure a sufficiently fast convergence rate and sparsity guarantees of the square-root lasso estimator further model assumptions are needed. 
\begin{remark}\label{remarksparsity}
If we have exact sparsity for each $\beta^{(k)}$ with $(j,k)\in\mathcal{M}_r$ the sparsity of $\gamma^{(m_r)}$ follows directly.\\
Observe that for $k\in\{1,\dots,p\}\setminus \{j\}$ and $l\in\{1,\dots,p\}\setminus \{j,k\}$ we have
$$\beta^{(k)}_l=0 \Leftrightarrow X_k\perp X_l|X_{-\{k,l\}}\Leftrightarrow \E[X_k X_l| X_{-\{k,l\}}]=0$$
which implies
$$\E[X_k X_l|X_{-\{j,k,l\}}]=\E\left[\E[X_k X_l| X_{-\{k,l\}}]|X_{-\{j,k,l\}}\right]=0$$
and thereby
$$\gamma_l^{(j,k)}=0 \Leftrightarrow X_k\perp X_l|X_{-\{j,k,l\}}\Leftrightarrow \E[X_k X_l| X_{-\{j,k,l\}}]=0.$$
Hence, the sparsity conditions for testing on an edge $(j,k)$ are satisfied if each node $j$ and $k$ is only sparsely connected to all other nodes.
\end{remark} 


\section{Main results}\label{mainsec}
We will prove that the assumptions of Corollary $2.2$ from Belloni et al. (2018) \cite{belloni2018uniformly} hold and hence we are able to use their results to construct confidence intervals even for a growing number of hypothesis $d=d_n$. Define
\begin{align*}
J_{m_r}&:=\partial_\theta\E[\psi_{m_r}(X),\theta,\eta_{m_r}]\big|_{\theta=\theta_{m_r}}=-\E[X_k(X_k-\eta_{m_r}^{(2)}X_{-m_r})]\\
\sigma_{m_r}^2&:=\mathbb{E}\left[J_{m_r}^{-2}\psi_{m_r}^2(X,\theta_{m_r},\eta_{m_r})\right]
\end{align*}
and the corresponding estimators 
\begin{align*}
\hat{J}_{m_r} &=-\E_n[X_k(X_k-\hat{\eta}_{m_r}^{(2)}X_{-m_r})]\\
\hat{\sigma}_{m_r}^2&=\mathbb{E}_n\left[\hat{J}_{m_r}^{-2}\psi_{m_r}^2(X,\hat{\theta}_{m_r},\hat{\eta}_{m_r})\right]
\end{align*}
for $r=1,\dots,d$. To construct confidence intervals we will employ the Gaussian multiplier bootstrap. Define
$$\hat{\psi}_{m_r}(X):=-\hat{\sigma}_{m_r}^{-1}\hat{J}_{m_r}^{-1}\psi_{m_r}(X,\hat{\theta}_{m_r},\hat{\eta}_{m_r})$$
and the process
$$\hat{\mathcal{N}}:=\left(\hat{\mathcal{N}}_{m_r}\right)_{m_r\in\mathcal{M}}=\left(\frac{1}{\sqrt{n}}\sum\limits_{i=1}^n\xi_i\hat{\psi}_{m_r}\big(X^{(i)}\big)\right)_{m_r\in\mathcal{M}}$$
where $(\xi_i)_{i=1}^n$ are independent standard normal random variables which are independent from $\big(X^{(i)}\big)_{i=1}^n$.
We define $c_{\alpha}$ as the $(1-\alpha)$-conditional quantile of $\sup_{m_r\in\mathcal{M}}|\hat{\mathcal{N}}_{m_r}|$ given the observations $\big(X^{(i)}\big)_{i=1}^n$. The following theorem is the main result of our paper and establishes simultaneous confidence bands for the target parameter $\theta_0$.


\begin{theorem}\label{maintheo}\ \\
Under the assumptions \ref{A1}-\ref{A4} with probability $1-o(1)$ uniformly in $P\in \mathcal{P}_n$ the estimator $\hat{\theta}$ in (\ref{estimator}) obeys
\begin{align}
P\left(\hat{\theta}_{m_r}-\frac{c_\alpha\hat{\sigma}_{m_r}}{\sqrt{n}}\le \theta_{m_r}\le \hat{\theta}_{m_r}+\frac{c_\alpha\hat{\sigma}_{m_r}}{\sqrt{n}}, r=1,\dots,d \right)\to 1-\alpha.
\end{align}
\end{theorem}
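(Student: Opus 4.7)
The plan is to cast the problem as a system of $d$ Z-estimation problems with high-dimensional nuisance parameters and verify the hypotheses of Corollary 2.2 in Belloni et al.\ (2015), which then delivers the uniform coverage statement via the Gaussian multiplier bootstrap. The strategy has three pillars: (i) Neyman orthogonality of the scores $\psi_{m_r}$, (ii) a uniform $o(n^{-1/4})$ rate for the nuisance estimates, and (iii) regularity/growth conditions that feed into the high-dimensional CLT governing $\hat{\mathcal N}$.

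For pillar (i), I would differentiate $\E[\psi_{m_r}(X,\theta_{m_r},\eta_{m_r}+t\tilde\eta)]$ at $t=0$ and split the resulting expression along $\tilde\eta^{(1)}$ and $\tilde\eta^{(2)}$. The $\tilde\eta^{(1)}$-term factors through $\E[\tilde\eta^{(1)} X_{-m_r}\cdot(X_k-\eta_{m_r}^{(2)}X_{-m_r})]=\E[\tilde\eta^{(1)}X_{-m_r}\cdot\nu^{(m_r)}]=0$ by the construction of $\nu^{(m_r)}$. The $\tilde\eta^{(2)}$-term factors through $\E[\tilde\eta^{(2)}X_{-m_r}\cdot(X_j-\theta_{m_r}X_k-\beta^{(m_r)}X_{-m_r})]=\E[\tilde\eta^{(2)}X_{-m_r}\cdot\varepsilon^{(m_r)}]=0$ because $\varepsilon^{(m_r)}=\varepsilon^{(j)}$ is orthogonal to $X_{-j}\supseteq\{X_k\}\cup X_{-m_r}$. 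Linearity of $\psi$ in $\theta$ is already noted in Remark \ref{linear}.

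For pillar (ii), I would invoke the paper's uniform lasso/post-lasso theorem (\ref{uniformlasso}), feeding it the hypotheses \ref{A1}--\ref{A4}. Gaussianity of $X$ together with \ref{A4} yields sub-Gaussian designs and restricted eigenvalue conditions uniformly over $m_r\in\mathcal M$, while \ref{A1} provides the approximate-sparse representations needed for both $X_j$ on $X_{-j}$ and $X_k$ on $X_{-m_r}$. This delivers rates of order $\sqrt{s\log(a_n)/n}$ in prediction norm and $\sqrt{s^2\log(a_n)/n}$ in $\ell_1$-norm, uniformly in $m_r$, which, under the growth rate \ref{A2}, is faster than $n^{-1/4}$ with the logarithmic slack required by the HD-CLT. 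Then I would bound $|\hat J_{m_r}-J_{m_r}|$ and $|\hat\sigma_{m_r}^2-\sigma_{m_r}^2|$ uniformly via standard expansions, using \ref{A4} to keep $|J_{m_r}|$ and $\sigma_{m_r}^2$ bounded away from zero, and Gaussianity to control the fourth-moment terms entering $\sigma_{m_r}^2$.

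For pillar (iii), I would verify the envelope and moment conditions of Corollary 2.2 by exploiting Gaussianity to obtain sub-exponential tails for products $X_jX_k$, $X_j\eta^{(\cdot)}X_{-m_r}$, and for $\psi_{m_r}^2$; the parameter $\tilde q$ and constant $c_q$ in \ref{A2} are calibrated exactly so that the envelope can be taken in $L^{\tilde q}$ while $\log^7(a_n)/n$-type remainders from the Chernozhukov--Chetverikov--Kato Gaussian approximation still vanish. Combining the above, the Taylor expansion of the empirical score around $(\theta_{m_r},\eta_{m_r})$ yields $\sqrt n(\hat\theta_{m_r}-\theta_{m_r})=-J_{m_r}^{-1}\mathbb G_n\psi_{m_r}(X,\theta_{m_r},\eta_{m_r})+o_P(1/\log d)$ uniformly in $r$, and the Gaussian multiplier bootstrap captures the quantiles of $\sup_r|\cdot|$ to within $o(1)$. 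The main obstacle is the third step: achieving uniformity in $r=1,\dots,d$ of both the nuisance rate and the linearization remainder with only approximate (rather than row) sparsity, and marrying that control with the high-dimensional CLT under the delicate growth condition \ref{A2}.
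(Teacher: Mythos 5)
Your proposal is correct and follows essentially the same route as the paper: both reduce the problem to verifying Assumptions 2.1--2.4 and the growth conditions of Corollary 2.2 in Belloni et al.\ (2015), establish Neyman orthogonality by the same two-term cancellation via $\E[\varepsilon^{(m_r)}\mid X_{-j}]=0$ and $\E[X_{-m_r}\nu^{(m_r)}]=0$, feed the uniform lasso/post-lasso rates of Theorem \ref{uniformlasso} into the nuisance-realization sets, and control $|\hat J_{m_r}-J_{m_r}|$ and $|\hat\sigma_{m_r}^2-\sigma_{m_r}^2|$ uniformly before invoking the Gaussian multiplier bootstrap. No substantive differences to report.
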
\ \\
Using theorem \ref{maintheo} we are able to construct standard confidence regions which are uniformly valid over a large set of variables and we can check null hypothesis of the form:
$$H_0: \mathcal{M}\cap E = \emptyset.$$
\begin{remark}\label{criticalregion}
Theorem \ref{maintheo} is basically an application of the gaussian approximation and multiplier bootstrap for maxima of sums of high-dimensional random vectors \cite{chernozhukov2013gaussian}. The central limit theorem and bootstrap in high dimension introduced by Chernozhukov, Chetverikov, Kato et al. (2017) \cite{chernozhukov2017central} extend these results to more general sets, more precisely sparsely convex sets. Hence our main theorem can be easily generalized to various confidence regions that contain the true target parameter with probability $1-\alpha$. Theorem \ref{maintheo} provides critical regions of the form
\begin{align}
\sup\limits_{r=1,\dots,d}\left|\sqrt{n}\frac{\hat{\theta}_{m_r}}{\hat{\sigma}_{m_r}}\right|>c_{1-\alpha}.\label{cube}
\end{align}
Alternatively, we can reject the null hypothesis if
\begin{align}
\sup\limits_{r=1,\dots,d}\left|\sqrt{n}\frac{\hat{\theta}_{m_r}}{\hat{\sigma}_{m_r}}\right|<c_{\frac{\alpha}{2}} \quad\text{or}\quad\sup\limits_{r=1,\dots,d}\left|\sqrt{n}\frac{\hat{\theta}_{m_r}}{\hat{\sigma}_{m_r}}\right|>c_{1-\frac{\alpha}{2}}.\label{sphere}
\end{align}
Both of these regions are based on the central limit theorem for hyperrectangles in high dimensions.
The confidence region (\ref{sphere}) is motivated by the fact that the standard normal distribution $\mathcal{N}(0,I_d)$ in high dimensions is concentrated in a thin spherical shell around the sphere of radius $\sqrt{d}$ as described by Roman Vershynin (2017) \cite{vershynin2017high} and therefore might have smaller volume.
More generally, define
\begin{align*}
\hat{\theta}^*_{m_r}(S,exp)=\sum\limits_{s=1}^S\left(\sqrt{n}\frac{\hat{\theta}_{m_{r-s}}}{\hat{\sigma}_{m_{r-s}}}\right)^{exp}
\end{align*}
for a fix $S$, $exp\in\{1,2\}$ and 
\begin{align*}
r-s:=\begin{cases}r-s\ &\text{if}\quad r-s>0 \\ d+(r-s)\ &\text{otherwise}\end{cases}.
\end{align*}
A test that reject the null hypothesis if
\begin{align}
\sup\limits_{r=1,\dots,d}\left|\hat{\theta}^*_{m_r}(S,exp)\right|>c^*_{1-\alpha}\label{S-sparse}
\end{align}
has level $\alpha$ by \cite{chernozhukov2017central}, since the constructed confidence regions correspond to S-sparsely convex sets. Here, $c^*_{1-\alpha}$ is the $(1-\alpha)$-conditional quantile of $\sup_{m_r\in\mathcal{M}}|\hat{\mathcal{N}}^*_{m_r}|$ given the observations $\big(X^{(i)}\big)_{i=1}^n$ with 
$$\hat{\mathcal{N}}^*_{m_r}=\sum\limits_{s=1}^S\left(\hat{\mathcal{N}}_{m_{r-s}}\right)^{exp}$$
where
\begin{align*}
r-s:=\begin{cases}r-s\ &\text{if}\quad r-s>0 \\ d+(r-s)\ &\text{otherwise.}\end{cases}
\end{align*}
\end{remark}


\section{Notes on the implementation} \label{implementation}
We implemented a function that will be added to the $R$-package $hdm$ and estimates the target coefficients
$$(\theta_{m_{1}},\dots,\theta_{m_{d}})^T=(\beta^{(j_1)}_{k_1},\dots,\beta^{(j_d)}_{k_d})^T$$
corresponding the considered set of potential edges
$$\mathcal{M}:=\{m_1,\dots,m_{d_n}\}$$
by the proposed method described in section \ref{estimation}. It can be used to perform hypothesis tests with asymptotic level $\alpha$ based on the different confidence regions described in comment \ref{criticalregion}. The nuisance function can be estimated by lasso, post-lasso or square-root lasso. 
\subsection{Cross-fitting} In general Z- estimation problems where a so called debiased or double machine learning (DML) method is used to construct confidence intervals, it is common to use cross-fitting in order to improve small sample properties. A detailed discussion of cross-fitted DML can be found in Chernozhukov et al. (2017) \cite{chernozhukov2017double}. The following algorithm generalizes our proposed method to a $K$-fold cross fitted version. We assume that $n$ is divisible by $K$ in order to simplify notation.  
\begin{algorithm}\label{DMLk}
1) Take a $K$-fold random partition $(I_k)_{k=1}^K$ of observation indices $[n]=\{1,\dots,n\}$ such that the size of each fold $I_k$ is $N$. Also, for each $k\in[K]=\{1,\dots,K\}$, define $I_k^c:=\{1,\dots,N\}\setminus I_k$. 2) For each $k\in [K]$ and $r=1,\dots,d$, construct an estimator 
$$\hat{\eta}_{k,m_r}=\hat{\eta}_{m_r}\left((X_i)_{i\in I_k^c}\right)$$
by lasso/ post-lasso or square-root lasso. 3) For each $k\in [K]$, construct an estimator $\hat{\theta}_k=(\hat{\theta}_{k,m_1},\dots,\hat{\theta}_{k,m_d})$ as in \ref{estimator}:
\begin{align*}
&\quad\sup\limits_{r=1,\dots,d}\left\{\left|\mathbb{E}_{N,k}^{}\Big[\psi_{m_r}\big(X,\hat{\theta}_{k,m_r},\hat{\eta}_{k,m_r}\big)\Big]\right|-\inf_{\theta\in\Theta_{m_r}}\left|\mathbb{E}_{N,k}^{}\Big[\psi_{m_r}\big(X,\theta,\hat{\eta}_{k,m_r}\big)\Big]\right|\right\}\\
&\le\epsilon_{n}
\end{align*}
with $\mathbb{E}_{N,k}[\psi_{m_r}(X_i)]=N^{-1}\sum_{i\in I_k}\psi_{m_r}(X_i)$. 4) Aggregate these estimators:
$$\hat{\theta}^K=\frac{1}{K}\sum\limits_{k=1}^K\hat{\theta}_k.$$
5) For $r=1,\dots,d$ construct the uniform valid confidence interval
$$\left[\hat{\theta}_{m_r}^K-\frac{c_\alpha\hat{\sigma}_{m_r}^K}{n},\hat{\theta}_{m_r}^K+\frac{c_\alpha\hat{\sigma}_{m_r}^K}{n}\right]$$
with
\begin{align*}
\hat{J}_{m_r}^K &=-\frac{1}{K}\sum\limits_{k=1}^K(X_k(X_k-\hat{\eta}_{k,m_r}^{(2)}X_{-m_r})),\\
\hat{\sigma}_{m_r}^K&=\sqrt{(\hat{J}_{m_r}^K)^{-2}\frac{1}{K}\sum\limits_{k=1}^K\left(\psi_{m_r}^2(X,\hat{\theta}^K_{m_r},\hat{\eta}_{k,m_r})\right)}.
\end{align*}
$c_\alpha$ is the $1-\alpha$ bootstrap quantile of $\sup\limits_{r=1,\dots,d}\hat{\mathcal{N}}_{m_r}$ with
$$\hat{\mathcal{N}}_{m_r}=\frac{1}{\sqrt{n}}\sum\limits_{i=1}^n\xi_i\hat{\psi}^K_{m_r}\big(X^{(i)}\big)$$
where $(\xi_i)_{i=1}^n$ are independent standard normal random variables which are independent from $\big(X^{(i)}\big)_{i=1}^n$ and
$$\hat{\psi}^K_{m_r}(X):=-\left(\hat{\sigma}_{m_r}^K\hat{J}_{m_r}^K\right)^{-1}\psi_{m_r}(X,\hat{\theta}^K_{m_r},\hat{\eta}^K_{m_r}).$$
\end{algorithm}
\noindent
The confidence region above corresponds to (\ref{cube}). Confidence regions corresponding to (\ref{sphere}) or (\ref{S-sparse}) can be constructed in an analogous way.


\section{Simulation Study}\label{simulation}
This section provides a simulation study on the proposed method. In each example the precision matrix of the Gaussian graphical model is generated as in the $R$-package $huge$ \cite{zhao2012huge}. Hence, the corresponding adjacency matrix $A$ is generated by setting the nonzero off-diagonal elements to be one and each other element to be zero. To obtain a positive definite pre-version of the precision matrix we set
$$\Phi_{pre}:= v\cdot A+(|\Lambda_{\min}(v\cdot A)|+0.1+u)\cdot I_{p\times p}.$$
Here $v=0.3$ and $u=0.1$ are chosen to control the magnitude of partial correlations.
The covariance matrix $\Sigma$ is generated by inverting $\Phi_{pre}$ and scaling the variances to one. The corresponding precision matrix $\Phi$ is given by $\Sigma^{-1}$. For a given $p$ we generate $n=200$ independent samples of 
$$X=(X_1,\dots,X_p)\sim\mathcal{N}(0,\Sigma)$$
and evaluate whether our test statistic would reject the null hypothesis for a specific set of edges $\mathcal{M}$ which satisfies the null hypothesis. Finally the acceptance rate is calculated over $l=1000$ independent simulations for a given confidence level $1-\alpha=0.95$. 
\subsection{Simulation settings} In our simulation study we estimate the correlation structure of four different designs that are described in the following.
\subsubsection{Example 1: Random Graph} Each pair of off-diagonal elements of the covariance matrix of the first $p-1$ regressors is randomly set to non-zero with probability $prob = 5/p$. The last regressor is added as an independent random variable. It results in about $(p-1)\cdot(p-2)\cdot prob /2$ edges in the graph. The corresponding precision matrix is of the form 
$$
\Phi:=\left(
\begin{array}{cc}
   \raisebox{-15pt}{\textrm{\huge \mbox{{$B$}}}}&0 \\[-4ex]
   & \vdots\\
   & 0\\
  0 \cdots 0 & 1
\end{array}
\right)
$$
where $B$ is a sparse matrix. We test the hypothesis, whether the last regressor is independent from all other regressors, corresponding to 
$$\mathcal{M}=\{(p,1),\dots,(p,p-1)\}.$$
\subsubsection{Example 2: Cluster Graph}
The regressors are evenly partitioned into $g=4$  disjoint groups. Each pair of off-diagonal elements $\Phi_{(i,j)}$ is set non-zero with probability $prob=5/p$, if both $i$ and $j$ belong to the same group. It results in about $g\cdot(p/g)\cdot(p/g-1)\cdot prob/2$ edges in the graph. The precision Matrix is of the form 
$$
\Phi:=\left(
\begin{array}{cccc}
B_1&&&{\textrm{\huge \mbox{{$0$}}}}\\
&B_2&&\\
&&B_3&\\
{\textrm{\huge \mbox{{$0$}}}}&&&B_4
\end{array}
\right)
$$
where each block $B_i$ is a sparse matrix.
We test the hypothesis that the first two hubs are conditionally independent. This corresponds to testing the tuples 
$$\mathcal{M}=\{(1,p/4+1),\dots,(1,p/2),(2,p/4+1),\dots,(p/4,p/2)\}.$$
\begin{figure}[H]
    \centering
    \subfloat[Random Graph]{{\includegraphics[width=5cm]{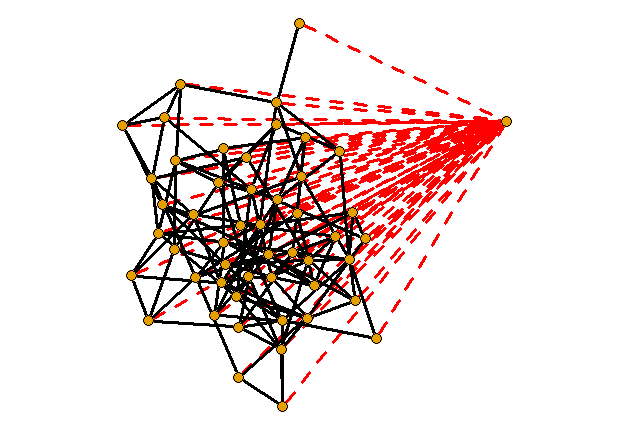} }}%
    \qquad
    \subfloat[Cluster Graph]{{\includegraphics[width=5cm]{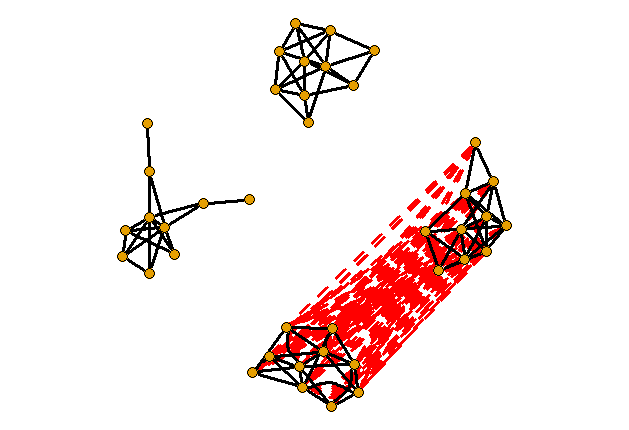} }}%
    \caption{Examples}%
		\scriptsize{The edges of the graph are colored in black and the edges contained in the hypothesis in red.}
    \label{fig:example}%
\end{figure}

\subsubsection{Example 3: Approximately Sparse Random Graph}
In this example we generate a random graph structure as in example $1$, but instead of setting the other elements of the adjacency matriy $A$ to zero we generate independent random entries from a uniform distribution on $[-a,a]$ with $a=1/20$. This results in a precision matrix of the form 
$$
\Phi:=\left(
\begin{array}{cc}
   \raisebox{-15pt}{\textrm{\huge \mbox{{$B$}}}}&0 \\[-4ex]
   & \vdots\\
   & 0\\
  0 \cdots 0 & 1
\end{array}
\right)
$$
where $B$ is not a sparse matrix anymore. We then again test the hypothesis, whether the last regressor is independent from all other regressors, corresponding to 
$$\mathcal{M}=\{(p,1),\dots,(p,p-1)\}.$$ 
\subsubsection{Example 4: Independent Graph}
By setting
$$\Phi:=I_{p\times p}$$
we generate samples of $p$ independent normal distributed random variables. We can test the hypothesis whether the regressors are independent by choosing
$$\mathcal{M}=\{(1,2),\dots,(1,p),(2,3),\dots,(p-1,p)\}.$$
\noindent
\subsection{Simulation results}
We provide simulated acceptance rates of our proposed estimation procedure with $B=1000$ bootstrap samples for all of the examples above. Confidence Intervall I corresponds to the standard case in (\ref{cube}), whereas Confidence Intervall II is based on the approximation of the sphere in (\ref{sphere}). In summary, the results reveal that the empirical acceptance rate is, on average, close to the nominal level of $95\%$ with a mean absolute deviation of $2.581\%$ over all simulations. The Confidence Intervall II has got a mean absolute deviation of $1.875\%$ and performs significantly better than Confidence Intervall I with a mean absolute deviation of $3.287\%$. More complex S-sparsely convex sets seem to result in better acceptance rates, whereas higher exponents do not improve the rates. The lowest mean absolute deviation ($1.138\%$) is achieved in table 2 for $S=5$, $exp=1$ and without cross-fitting.  
\begin{table}[H]
\centering
\begin{tabular}{ccccccccc}
  \toprule
	& & &\multicolumn{3}{c}{Confidence Interval I}&\multicolumn{3}{c}{Confidence Intervall II}\\
	\cmidrule(l{5pt}r{5pt}){4-6} \cmidrule(l{5pt}r{5pt}){7-9}
Model & p & d  & lasso & post-lasso & sqrt-lasso & lasso & post-lasso & sqrt-lasso \\ 
  \midrule
\multirow{3}{*}{random}&20 & 19 & 0.931 & 0.938 & 0.936 & 0.929 & 0.930 & 0.935 \\ 
  &50 & 49 & 0.915 & 0.915 & 0.916 & 0.926 & 0.929 & 0.932 \\ 
  &100 & 99 & 0.912 & 0.912 & 0.908 & 0.927 & 0.930 & 0.929 \\
	\midrule
	\multirow{3}{*}{cluster}&20 & 25 & 0.916 & 0.942 & 0.918 & 0.915 & 0.930 & 0.921 \\ 
  &40 & 100 & 0.916 & 0.919 & 0.917 & 0.934 & 0.947 & 0.937 \\ 
  &60 & 225 & 0.897 & 0.893 & 0.899 & 0.921 & 0.922 & 0.927 \\ 
	\midrule
	\multirow{3}{*}{approx}&20 & 19 & 0.931 & 0.931 & 0.931 & 0.947 & 0.946 & 0.947 \\ 
  &50 & 49 & 0.908 & 0.908 & 0.908 & 0.920 & 0.920 & 0.920 \\ 
  &100 & 99 & 0.902 & 0.902 & 0.902 & 0.935 & 0.935 & 0.935 \\ 
	\midrule
	\multirow{3}{*}{indepent}&5 & 10 & 0.931 & 0.931 & 0.931 & 0.933 & 0.933 & 0.933 \\ 
  &10 & 45 & 0.927 & 0.927 & 0.927 & 0.937 & 0.937 & 0.937 \\ 
  &20 & 190 & 0.896 & 0.896 & 0.896 & 0.920 & 0.920 & 0.920 \\ 
   \bottomrule
\end{tabular}
\caption{Simulation results for S=1,exp=1 and 1-fold} 
\end{table}

\begin{table}[H]
\centering
\begin{tabular}{ccccccccc}
  \toprule
	& & &\multicolumn{3}{c}{Confidence Interval I}&\multicolumn{3}{c}{Confidence Intervall II}\\
	\cmidrule(l{5pt}r{5pt}){4-6} \cmidrule(l{5pt}r{5pt}){7-9}
Model & p & d  & lasso & post-lasso & sqrt-lasso & lasso & post-lasso & sqrt-lasso \\ 
  \midrule
\multirow{3}{*}{random}&20 & 19 & 0.969 & 0.925 & 0.956 & 0.951 & 0.932 & 0.947 \\ 
  &50 & 49 & 0.942 & 0.944 & 0.944 & 0.942 & 0.954 & 0.953 \\ 
  &100 & 99 & 0.934 & 0.941 & 0.940 & 0.950 & 0.949 & 0.952 \\ 
	\midrule
	\multirow{3}{*}{cluster}&20 & 25 & 0.972 & 0.958 & 0.973 & 0.914 & 0.936 & 0.914 \\ 
  &40 & 100 & 0.941 & 0.937 & 0.945 & 0.930 & 0.936 & 0.942 \\ 
  &60 & 225 & 0.931 & 0.947 & 0.942 & 0.943 & 0.937 & 0.950 \\  
	\midrule
	\multirow{3}{*}{approx}&20 & 19 & 0.958 & 0.958 & 0.958 & 0.965 & 0.965 & 0.965 \\ 
  &50 & 49 & 0.937 & 0.937 & 0.937 & 0.940 & 0.940 & 0.940 \\ 
  &100 & 99 & 0.920 & 0.921 & 0.920 & 0.936 & 0.936 & 0.936 \\ 
	\midrule
	\multirow{3}{*}{indepent}&5 & 10 & 0.951 & 0.951 & 0.951 & 0.951 & 0.951 & 0.951 \\ 
  &10 & 45 & 0.932 & 0.932 & 0.932 & 0.952 & 0.952 & 0.952 \\ 
  &20 & 190 & 0.926 & 0.926 & 0.926 & 0.947 & 0.947 & 0.947 \\
   \bottomrule
\end{tabular}
\caption{Simulation results for S=5,exp=1 and 1-fold} 
\end{table}

\begin{table}[H]
\centering
\begin{tabular}{ccccccccc}
  \toprule
	& & &\multicolumn{3}{c}{Confidence Interval I}&\multicolumn{3}{c}{Confidence Intervall II}\\
	\cmidrule(l{5pt}r{5pt}){4-6} \cmidrule(l{5pt}r{5pt}){7-9}
Model & p & d  & lasso & post-lasso & sqrt-lasso & lasso & post-lasso & sqrt-lasso \\ 
  \midrule
\multirow{3}{*}{random}&20 & 19 & 0.909 & 0.916 & 0.921 & 0.916 & 0.921 & 0.930 \\ 
  &50 & 49 & 0.931 & 0.910 & 0.926 & 0.926 & 0.907 & 0.927 \\ 
  &100 & 99 & 0.907 & 0.909 & 0.909 & 0.917 & 0.934 & 0.923 \\ 
	\midrule
	\multirow{3}{*}{cluster}&20 & 25 & 0.910 & 0.905 & 0.905 & 0.904 & 0.898 & 0.901 \\ 
  &40 & 100 & 0.909 & 0.910 & 0.910 & 0.905 & 0.919 & 0.921 \\ 
  &60 & 225 & 0.885 & 0.894 & 0.898 & 0.912 & 0.925 & 0.934 \\
	\midrule
	\multirow{3}{*}{approx}&20 & 19 & 0.929 & 0.928 & 0.929 & 0.929 & 0.928 & 0.929 \\ 
  &50 & 49 & 0.888 & 0.888 & 0.888 & 0.911 & 0.911 & 0.911 \\ 
  &100 & 99 & 0.907 & 0.907 & 0.907 & 0.936 & 0.936 & 0.936 \\ 
	\midrule
	\multirow{3}{*}{indepent}&5 & 10 & 0.930 & 0.930 & 0.930 & 0.939 & 0.939 & 0.939 \\ 
  &10 & 45 & 0.921 & 0.921 & 0.921 & 0.933 & 0.933 & 0.933 \\ 
  &20 & 190 & 0.916 & 0.916 & 0.916 & 0.938 & 0.938 & 0.938 \\ 
   \bottomrule
\end{tabular}
\caption{Simulation results for S=5,exp=2 and 1-fold} 
\end{table}

\begin{table}[H]
\centering
\begin{tabular}{ccccccccc}
  \toprule
	& & &\multicolumn{3}{c}{Confidence Interval I}&\multicolumn{3}{c}{Confidence Intervall II}\\
	\cmidrule(l{5pt}r{5pt}){4-6} \cmidrule(l{5pt}r{5pt}){7-9}
Model & p & d  & lasso & post-lasso & sqrt-lasso & lasso & post-lasso & sqrt-lasso \\ 
  \midrule
\multirow{3}{*}{random}&20 & 19 & 0.917 & 0.912 & 0.919 & 0.919 & 0.932 & 0.918 \\ 
  &50 & 49 & 0.927 & 0.911 & 0.925 & 0.938 & 0.936 & 0.938 \\ 
  &100 & 99 & 0.903 & 0.894 & 0.907 & 0.926 & 0.933 & 0.927 \\  
	\midrule
	\multirow{3}{*}{cluster}&20 & 25 & 0.920 & 0.899 & 0.918 & 0.930 & 0.929 & 0.929 \\ 
  &40 & 100 & 0.920 & 0.883 & 0.919 & 0.927 & 0.926 & 0.923 \\ 
  &60 & 225 & 0.889 & 0.885 & 0.896 & 0.920 & 0.930 & 0.928 \\ 
	\midrule
	\multirow{3}{*}{approx}&20 & 19 & 0.921 & 0.922 & 0.921 & 0.932 & 0.934 & 0.932 \\ 
  &50 & 49 & 0.899 & 0.899 & 0.899 & 0.926 & 0.926 & 0.926 \\ 
  &100 & 99 & 0.889 & 0.889 & 0.889 & 0.930 & 0.929 & 0.930 \\ 
	\midrule
	\multirow{3}{*}{indepent}&5 & 10 & 0.922 & 0.923 & 0.922 & 0.935 & 0.934 & 0.935 \\ 
  &10 & 45 & 0.905 & 0.905 & 0.905 & 0.937 & 0.937 & 0.937 \\ 
  &20 & 190 & 0.903 & 0.903 & 0.903 & 0.936 & 0.936 & 0.936 \\ 
   \bottomrule
\end{tabular}
\caption{Simulation results for S=1,exp=1 and 3-fold} 
\end{table}

\begin{table}[H]
\centering
\begin{tabular}{ccccccccc}
  \toprule
	& & &\multicolumn{3}{c}{Confidence Interval I}&\multicolumn{3}{c}{Confidence Intervall II}\\
	\cmidrule(l{5pt}r{5pt}){4-6} \cmidrule(l{5pt}r{5pt}){7-9}
Model & p & d  & lasso & post-lasso & sqrt-lasso & lasso & post-lasso & sqrt-lasso \\ 
  \midrule
\multirow{3}{*}{random}&20 & 19 & 0.970 & 0.919 & 0.964 & 0.950 & 0.932 & 0.958 \\ 
  &50 & 49 & 0.923 & 0.911 & 0.927 & 0.938 & 0.951 & 0.935 \\ 
  &100 & 99 & 0.929 & 0.925 & 0.930 & 0.949 & 0.940 & 0.948 \\ 
	\midrule
	\multirow{3}{*}{cluster}&20 & 25 & 0.971 & 0.970 & 0.971 & 0.915 & 0.931 & 0.915 \\ 
  &40 & 100 & 0.926 & 0.915 & 0.925 & 0.925 & 0.917 & 0.924 \\ 
  &60 & 225 & 0.923 & 0.925 & 0.926 & 0.917 & 0.939 & 0.930 \\
	\midrule
	\multirow{3}{*}{approx}&20 & 19 & 0.959 & 0.959 & 0.959 & 0.958 & 0.956 & 0.958 \\ 
  &50 & 49 & 0.932 & 0.932 & 0.932 & 0.931 & 0.933 & 0.931 \\ 
  &100 & 99 & 0.929 & 0.929 & 0.929 & 0.949 & 0.950 & 0.949 \\ 
	\midrule
	\multirow{3}{*}{indepent}&5 & 10 & 0.940 & 0.940 & 0.940 & 0.951 & 0.951 & 0.951 \\ 
  &10 & 45 & 0.922 & 0.922 & 0.922 & 0.938 & 0.938 & 0.938 \\ 
  &20 & 190 & 0.930 & 0.930 & 0.930 & 0.938 & 0.938 & 0.938 \\ 
   \bottomrule
\end{tabular}
\caption{Simulation results for S=5,exp=1 and 3-fold} 
\end{table}

\begin{table}[H]
\centering
\begin{tabular}{ccccccccc}
  \toprule
	& & &\multicolumn{3}{c}{Confidence Interval I}&\multicolumn{3}{c}{Confidence Intervall II}\\
	\cmidrule(l{5pt}r{5pt}){4-6} \cmidrule(l{5pt}r{5pt}){7-9}
Model & p & d  & lasso & post-lasso & sqrt-lasso & lasso & post-lasso & sqrt-lasso \\ 
  \midrule
\multirow{3}{*}{random}&20 & 19 & 0.914 & 0.897 & 0.918 & 0.922 & 0.921 & 0.923 \\ 
  &50 & 49 & 0.914 & 0.896 & 0.911 & 0.920 & 0.920 & 0.921 \\ 
  &100 & 99 & 0.891 & 0.878 & 0.893 & 0.918 & 0.909 & 0.917 \\ 
	\midrule
	\multirow{3}{*}{cluster}&20 & 25 & 0.885 & 0.882 & 0.888 & 0.900 & 0.896 & 0.901 \\ 
  &40 & 100 & 0.880 & 0.877 & 0.879 & 0.898 & 0.910 & 0.907 \\ 
  &60 & 225 & 0.886 & 0.884 & 0.897 & 0.915 & 0.921 & 0.932 \\
	\midrule
	\multirow{3}{*}{approx}&20 & 19 & 0.931 & 0.930 & 0.931 & 0.938 & 0.937 & 0.938 \\ 
  &50 & 49 & 0.914 & 0.913 & 0.914 & 0.932 & 0.933 & 0.932 \\ 
  &100 & 99 & 0.894 & 0.894 & 0.894 & 0.924 & 0.924 & 0.924 \\  
	\midrule
	\multirow{3}{*}{indepent}&5 & 10 & 0.923 & 0.922 & 0.923 & 0.943 & 0.942 & 0.943 \\ 
  &10 & 45 & 0.917 & 0.916 & 0.917 & 0.934 & 0.935 & 0.934 \\ 
  &20 & 190 & 0.890 & 0.890 & 0.890 & 0.932 & 0.932 & 0.932 \\ 
   \bottomrule
\end{tabular}
\caption{Simulation results for S=5,exp=2 and 3-fold} 
\end{table}

\newpage
\appendix
\section{Proof of Theorem 1}\label{appendixmain}
\begin{proof}
We want to use  corollary $2.2$ from Belloni et al. (2018) \cite{belloni2018uniformly}. Consequently, we will show that their assumptions 2.1-2.4 and the growth conditions of corollary $2.2$ hold by modifying the proof of corollary $3.2$ in \cite{belloni2018uniformly}. To make the proof more comparable we try to keep the notation as similar as possible. This implies that we use $C$ for a strictly positive constant, independent of $n$ and $r$, which may have a different value in each appearance. The notation $a_n\lesssim b_n$ stands for $a_n\le Cb_n$ for all $n$ for some fixed $C$. Additionally $a_n=o(1)$ stands for uniform convergence towards zero meaning there exists sequence $(b_n)_{n\ge 1}$ with $|a_n|\le b_n$, $b_n$ is independent of $P\in\mathcal{P}_n$ for all $n$ and $b_n\to 0$. Finally, the notation $a_n\lesssim_{P}b_n$ means that for any $\epsilon>0$, there exists $C$ such that uniformly over all $n$ we have $P_P(a_n>Cb_n)\le \epsilon$.\\ 
Let $m_r=(j,k)$ be an arbitrary set in $\mathcal{M}$. We have
$$\max\limits_r\E\left[\left(\nu^{(m_r)}\right)^2\right]\lesssim 1\ \text{and}\ \max\limits_r\E\left[\left(\varepsilon^{(m_r)}\right)^2\right]\lesssim 1 $$
due to the assumptions \ref{A3} and \ref{A4}. Define the convex set
$$T_{m_r}=\{\eta=(\eta^{(1)},\eta^{(2)}):\eta^{(1)}\in\mathbb{R}^{p-2},\eta^{(2)}\in\mathbb{R}^{p-2}\}$$
and endow $T_{m_r}$ with the norm
$$||\eta||_e=||\eta^{(1)}||_2\vee ||\eta^{(2)}||_2.$$
Further let $\tau_n:=\sqrt{\frac{s\log (a_n)}{n}}$ and define the nuisance realization set
\begin{align*}\mathcal{T}_{m_r}=\bigg\{&\eta\in T_{m_r}:||\eta^{(1)}||_0\vee ||\eta^{(2)}||_0\le Cs,\\
&||\eta^{(1)}-\beta^{(m_r)}||_2\vee ||\eta^{(2)}-\gamma^{(m_r)}||_2\le C\tau_n,\\
&||\eta^{(1)}-\beta^{(m_r)}||_1\vee ||\eta^{(2)}-\gamma^{(m_r)}||_1\le C\sqrt{s}\tau_n\bigg\}\cup\left\{\left(\beta^{(m_r)},\gamma^{(m_r)}\right)\right\}
\end{align*}
for a sufficiently large constant $C>0$. First we verify Assumption 2.1 (i). The moment condition holds since
\begin{align*}
&\quad\E[\psi_{m_r}(X,\theta_{m_r},\eta_{m_r})]
\\&=\E[\varepsilon^{(m_r)}\nu^{(m_r)}]
\\&=\E[\E[\varepsilon^{(m_r)}\nu^{(m_r)}|X_{-j}]]=\E[\nu^{(m_r)}\underbrace{\E[\varepsilon^{(m_r)}|X_{-j}]}_{=0}]=0.
\end{align*}
In addition, we have
\begin{align*}
S_n:&=\E\left[\max\limits_r|\sqrt{n}\E_n[\psi_{m_r}(X,\theta_{m_r},\eta_{m_r})]|\right]\\
&=\E\left[\sup\limits_{f\in\mathcal{F}} \mathbb{G}_n(f)\right]
\end{align*}
with $\mathcal{F}=\{\varepsilon^{(m_r)}\nu^{(m_r)}|r=1,\dots,d\}$ and $ \mathbb{G}_n(f):=\sqrt{n}|\E_n[f]-\E[f]|$. By the same arguments as in the beginning of proof of theorem \ref{uniformlasso} we conclude that the envelope $\sup\limits_{f\in\mathcal{F}}|f|$ of $\mathcal{F}$ fulfills
\begin{align*}
||\max\limits_r|\varepsilon^{(m_r)}\nu^{(m_r)}|||_{P,q}&=\E\left[\max\limits_r\left(|\varepsilon^{(m_r)}\nu^{(m_r)}|\right)^q\right]^{1/q}\\
&\le\E\left[\max\limits_r\left(|\varepsilon^{(m_r)}|\right)^{2q}\right]^{1/2q}\E\left[\max\limits_r\left(|\nu^{(m_r)}|\right)^{2q}\right]^{1/2q}\\
&\le C\log(d),
\end{align*}
since the error terms are normal distributed.
Using lemma O.2 (Maximal Inequality I) in \cite{belloni2018uniformly} with $|\mathcal{F}|=d$, we have 
\begin{align*}
S_n\le C\log^{1/2}(d)+C\log^{1/2}(d)\left(n^{\frac{2}{q}}\frac{\log^{3}(d)}{n}\right)^{1/2}\lesssim\log^{1/2}(d)
\end{align*}
by the assumption \ref{A2} for a $q>2\tilde{q}$. Hence, assumption \ref{A3} implies that for all $r=1,\dots,d$, $\Theta_{m_r}$ contains an interval of radius $Cn^{-\frac{1}{2}}S_n\log(n)$ centered at $\theta_{m_r}$ for all sufficiently large $n$ for any constant $C$. Assumption 2.1 (i) follows.\\ \\
For all $m_r\in\mathcal{M}$, the map $(\theta,\eta)\mapsto\psi_{m_r}(X,\theta,\eta)$ is twice continuously Gateaux-differentiable on $\Theta_{m_r}\times\mathcal{T}_{m_r}$, and so is the map $(\theta,\eta)\mapsto\E[\psi_{m_r}(X,\theta,\eta)]$. Further we have
\begin{align*}
D_{m_r,0}[\eta,\eta_{m_r}]:&=\partial_t\E[\psi_{m_r}(X,\theta_{m_r},\eta_{m_r}+t(\eta-\eta_{m_r}))]\big|_{t=0}\\
&=\E\bigg[\partial_t\bigg\{\left(X_j-\theta_{m_r}X_k-\left(\eta_{m_r}^{(1)}+t(\eta^{(1)}-\eta_{m_r}^{(1)})\right) X_{-m_r}\right)\\
&\qquad\left(X_k-\left(\eta_{m_r}^{(2)}+t(\eta^{(2)}-\eta_{m_r}^{(2)})\right) X_{-m_r}\right)\bigg\}\bigg]\big|_{t=0}\\
&=\E[\varepsilon^{(m_r)}(\eta^{(2)}_{m_r}-\eta^{(2)})X_{-m_r}]+\E[(\eta_{m_r}^{(1)}-\eta^{(1)})X_{-m_r}\nu^{(m_r)}]\\
&=0.
\end{align*}
Therefore, Assumptions 2.1 (ii) and 2.1 (iii) hold. Remark that
 \begin{align*}
|J_{m_r}|&=|\partial_\theta\E[\psi_{m_r}(X,\theta,\eta_{m_r})]|_{\theta=\theta_{m_r}}|\\
&=|\E[-X_k\nu^{(m_r)}]||=|\E[(\nu^{(m_r)})^2]|\le C
\end{align*}
and
\begin{align*}
|J_{m_r}|&=|\E[(\nu^{(m_r)})^2]|\ge c
\end{align*}
due to assumption \ref{A4}. Since the score $\psi$ is linear with respect to $\theta$, we have for all $m_r\in\mathcal{M}$ and $\theta\in\Theta_{m_r}$
\begin{align*}
\E[\psi_{m_r}(X,\theta,\eta_{m_r})]=J_{m_r}(\theta-\theta_{m_r})
\end{align*}
using the moment condition. This gives us Assumption 2.1 (iv).\\
For all $t\in [0,1)$, $m_r\in\mathcal{M}$, $\theta\in\Theta_{m_r}$, $\eta\in \mathcal{T}_{m_r}$ we have
\begin{align*}
&\E \big[(\psi_{m_r}(X,\theta,\eta)-\psi_{m_r}(X,\theta_{m_r},\eta_{m_r}))^2\big]\\
=\ &\E \big[(\psi_{m_r}(X,\theta,\eta)-\psi_{m_r}(X,\theta_{m_r},\eta)+\psi_{m_r}(X,\theta_{m_r},\eta)-\psi_{m_r}(X,\theta_{m_r},\eta_{m_r}))^2\big]\\
\le\ &C\bigg(\underbrace{\E \big[(\psi_{m_r}(X,\theta,\eta)-\psi_{m_r}(X,\theta_{m_r},\eta))^2\big]}_{=:I}\\
&\quad\vee \underbrace{\E \big[(\psi_{m_r}(X,\theta_{m_r},\eta)-\psi_{m_r}(X,\theta_{m_r},\eta_{m_r}))^2\big]}_{=:II}\bigg)
\end{align*}
with
\begin{align*}
I&=|\theta-\theta_{m_r}|^2\E \left[\left(X_k(X_k-\eta^{(2)}X_{-m_r})\right)^2\right]\\
&\le |\theta-\theta_{m_r}|^2\left(\E[X_k^2]E[(X_k-\eta^{(2)}X_{-m_r})^2]\right)^{1/2}\\
&\le C |\theta-\theta_{m_r}|^2
\end{align*}
due to assumptions \ref{A3}, \ref{A4} and the definition of $\mathcal{T}_{m_r}$. Additionally we have 
\begin{align*}
II&=\E \bigg[\Big(\big(X_j-\theta_{m_r}X_k-\eta^{(1)}X_{-m_r}\big)\big(X_k-\eta^{(2)}X_{-m_r}\big)\\
&\quad\quad-\big(X_j-\theta_{m_r}X_k-\eta^{(1)}_{m_r}X_{-m_r}\big)\big(X_k-\eta^{(2)}_{m_r}X_{-m_r}\big)\Big)^2\bigg]\\
&=\E \bigg[\Big(\big(X_j-\theta_{m_r}X_k-\eta^{(1)}X_{-m_r}\big)\big((\eta^{(2)}_{m_r}-\eta^{(2)})X_{-m_r}\big)\\
&\quad\quad+\big(X_k-\eta^{(2)}_{m_r}X_{-m_r}\big)\big((\eta^{(1)}_{m_r}-\eta^{(1)})X_{-m_r}\big)\Big)^2\bigg]\\
&\le C \Big(\|\eta^{(2)}_{m_r}-\eta^{(2)}\|_2\vee \|\eta^{(1)}_{m_r}-\eta^{(1)}\|_2\Big)^2\\
&=C \|\eta_{m_r}-\eta\|_e^2
\end{align*}
with similar arguments as in $I$ above using
$$ \sup\limits_{\|\xi\|_2=1} \E\left[(\xi X)^4\right]\le C$$
due to the normal distributed design. Combining these results gives us Assumption 2.1 (v) (a).\\
Observe that
\begin{align*}
&\ \Big|\partial_t\E \Big[\psi_{m_r}\big(X,\theta,\eta_{m_r}+t(\eta-\eta_{m_r})\big)\Big]\Big|\\
=&\ \bigg|\E \bigg[\Big(X_j-\theta X_k-\big(\eta^{(1)}_{m_r}+t(\eta^{(1)}-\eta^{(1)}_{m_r})\big)X_{-m_r}\Big)\big((\eta^{(2)}_{m_r}-\eta^{(2)})X_{-m_r}\big)\\
&\quad\quad +\Big(X_k-\big(\eta^{(2)}_{m_r}+t(\eta^{(2)}-\eta^{(2)}_{m_r})\big)X_{-m_r}\Big)\big((\eta^{(1)}_{m_r}-\eta^{(1)})X_{-m_r}\big)\bigg]\bigg|\\
\le&\ C \|\eta_{m_r}-\eta\|_e
\end{align*}
with the same argument as above, which gives us Assumption 2.1 (v) (b) with $B_{1n}=C$. To complete the Assumption 2.1 (v) (c) with $B_{2n}=C$ observe that
\begin{align*}
&\Big|\partial^2_t\E \Big[\psi_{m_r}\big(X,\theta_{m_r}+t(\theta-\theta_{m_r}),\eta_{m_r}+t(\eta-\eta_{m_r})\big)\Big]\Big|\\
=& \bigg|\partial_t\E \bigg[\Big(X_j-\big(\theta_{m_r}+t(\theta-\theta_{m_r})\big)X_k-\big(\eta^{(1)}_{m_r}+t(\eta^{(1)}-\eta^{(1)}_{m_r})\big)X_{-m_r}\Big)\\
&\quad\quad\cdot\big((\eta^{(2)}_{m_r}-\eta^{(2)})X_{-m_r}\big)\\
&\quad\quad +\Big(X_k-\big(\eta^{(2)}_{m_r}+t(\eta^{(2)}-\eta^{(2)}_{m_r})\big)X_{-m_r}\Big)\\
&\quad\quad\cdot\big((\theta_{m_r}-\theta)X_k+(\eta^{(1)}_{m_r}-\eta^{(1)})X_{-m_r}\big)\bigg]\bigg|\\
=&\Big|2\E \Big[\big((\eta^{(2)}_{m_r}-\eta^{(2)})X_{-m_r}\big)\big((\theta_{m_r}-\theta)X_k+(\eta^{(1)}_{m_r}-\eta^{(1)})X_{-m_r}\big)\Big]\Big|\\
\le&\ 2 \bigg(\underbrace{\E \Big[\big((\eta^{(2)}_{m_r}-\eta^{(2)})X_{-m_r}\big)^2\Big]}_{\le C\|\eta^{(2)}_{m_r}-\eta^{(2)}\|_2^2}\underbrace{\E \Big[\big((\theta_{m_r}-\theta)X_k+(\eta^{(1)}_{m_r}-\eta^{(1)})X_{-m_r}\big)^2\Big]}_{\le C\big(|\theta_{m_r}-\theta|^2+\|\eta^{(1)}_{m_r}-\eta^{(1)}\|_2^2\big)}\bigg)^{1/2}\\
\le &\ C\big(|\theta_{m_r}-\theta|^2\vee\|\eta_{m_r}-\eta\|_e^2\big).
\end{align*}
Therefore Assumption 2.1 holds. Due to the construction of $\mathcal{T}_{m_r}$ Assumptions 2.2 (ii) and (iii) hold. Next, we show that the assumptions of theorem \ref{uniformlasso} from section \ref{uniformestimation} hold which implies Assumption 2.2 (i). Remark that conditions \ref{tails} and \ref{growthc} are satisfied with $\rho=2$. Condition \ref{A1} implies condition \ref{asparse}. Let $\underline{\sigma}^2>0$ be a uniform lower bound for the variances of the error terms and the regressors and let $c:=\underline{\sigma}z_{\tilde{c}}$, where $z_{\tilde{c}}$ is the $\tilde{c}$-quantile of a standard normal distribution for an arbitrary but fixed $\tilde{c}\in(\frac{1}{2},\frac{3}{4})$. Uniformly for all $r=1,\dots,d$ and $l\in \{1,\dots,p\}\setminus \{j\}$, it holds
\begin{align*}
P\left((\varepsilon^{(m_r)})^2X_{l}^2\ge c^4\right)
&=1-P\left(|\varepsilon^{(m_r)} X_{l}|\le c^2\right)\\
&\ge 1-P\left(|\varepsilon^{(m_r)}|\le c\vee|X_{l}|\le c\right)\\
&\ge 1-\left(P\left(|\varepsilon^{(m_r)}|\le c\right)+P\left(|X_{l}|\le c\right)\right)\\
&\ge 1-2P\left(\underline{\sigma}|Z|\le c\right)\\
&=3-4\tilde{c}>0
\end{align*}
where $Z\sim\mathcal{N}(0,1)$, which implies that
\begin{align*}
\min_{r}\min_{l}\E[(\varepsilon^{(m_r)})^2X_{l}^2]&\ge c^4 (3-4\tilde{c})>0.
\end{align*}
Analogously 
\begin{align*}
\min_{r}\min_{l}\E[(\nu^{(m_r)})^2X_{l}^2]>0.
\end{align*}
Combined with condition \ref{A4} this implies condition \ref{eigen}. Therefore we are able to estimate the nuisance parameters at a sufficiently fast rate.\\
Define
\begin{align*}
\mathcal{F}_{1}:&=\Big\{\psi_{m_r}(\cdot,\theta,\eta):r\in\{1,\dots,d\},\theta\in\Theta_{m_r},\eta\in\mathcal{T}_{m_r}\Big\}.
\end{align*}
To bound the covering entropy of $\mathcal{F}_{1}$ we at first exclude the true nuisance parameter and define
\begin{align*}
\mathcal{F}_{1,1}:&=\Big\{\psi_{m_r}(\cdot,\theta,\eta):r\in\{1,\dots,d\},\theta\in\Theta_{m_r},\eta\in\mathcal{T}_{m_r}\setminus\{\eta_{m_r}\}\Big\}\subseteq\mathcal{F}_{1,1}^{(1)}\mathcal{F}_{1,1}^{(2)}
\end{align*}
with 
\begin{align*}
\mathcal{F}_{1,1}^{(1)}&=\{X\to (X_j-\theta X_k-\eta^{(1)} X_{-m_r}):r\in\{1,\dots,d\},\theta\in\Theta_{m_r},\eta^{(1)}\in\mathcal{T}^*_{m_r,1}\}\\
\mathcal{F}_{1,1}^{(2)}&=\{X\to (X_k-\eta^{(2)} X_{-m_r}):r\in\{1,\dots,d\},\eta^{(2)}\in\mathcal{T}^*_{m_r,2}\}
\end{align*}
where $\mathcal{T}^*_{m_r}:=\mathcal{T}_{m_r}\setminus\{\eta_{m_r}\}$. Observe that the envelope $F_{1,1}^{(1)}$ of $\mathcal{F}_{1,1}^{(1)}$ fulfills
\begin{align*}
\big\|\big(F_{1,1}^{(1)}\big)^2\big\|_{P,2q}&\le \Big\|\sup\limits_{r\in\{1,\dots,d\}}\sup\limits_{\theta\in\Theta_{m_r},\|\eta^{(1)}_{m_r}-\eta^{(1)}\|_1\le C\sqrt{s}\tau_n}\Big(|\varepsilon^{(m_r)}|\\
&\quad\quad +|(\theta_{m_r}-\theta)X_k|+|(\eta^{(1)}_{m_r}-\eta^{(1)})X_{-m_r}|\Big)^2\Big\|_{P,2q}\\
&\lesssim \big\|\sup\limits_{r\in\{1,\dots,d\}}\big(\varepsilon^{(m_r)}\big)^2\big\|_{P,2q} + \big\|\sup\limits_{r\in\{1,\dots,d\}}X_k^2\big\|_{P,2q}\\
&\quad\quad+s\tau_n^2 \big\|\sup\limits_{r\in\{1,\dots,d\}}\|X_{-m_r}\|^2_{\infty}\big\|_{P,2q}\\
&\lesssim \log(d)+\log(d)+s\tau_n^2\log(a_n)\\
&\lesssim \log(a_n)
\end{align*}
and with an analogous argument 
$$ \big\|\big(F_{1,1}^{(2)}\big)^2\big\|_{P,2q}\lesssim \log(a_n).$$
Since we excluded the true nuisance parameter, which does not need to be sparse, we have $\mathcal{F}_{1,1}^{(1)}\subseteq \mathcal{G}_{1,1}$ and $\mathcal{F}_{1,1}^{(2)}\subseteq \mathcal{G}_{1,1}$
with
$$\mathcal{G}_{1,1}:=\Big\{X\to \xi X:\xi\in\R^p,\|\xi\|_0\le Cs,\|\xi\|_2\le C\Big\}$$
where $\mathcal{G}_{1,1}$ is a union over ${p}\choose{Cs}$ VC-subgraph classes $\mathcal{G}_{1,1,k}$ with VC indices less or equal to $Cs+2$ (Lemma 2.6.15, Van der Vaart and Wellner (1996)\cite{vanweak}). This implies that $\mathcal{F}_{1,1}^{(1)}$ and $\mathcal{F}_{1,1}^{(2)}$ are unions over ${p}\choose{Cs}$ VC-subgraph classes $\mathcal{F}_{1,1,k}^{(1)}$ and $\mathcal{F}_{1,1,k}^{(2)}$ with VC indices less or equal to $Cs+2$.\\
Due to theorem 2.6.7 in \cite{vanweak} we obtain
\begin{align*}
&\quad\sup_{Q}\log N(\varepsilon\|F_{1,1}^{(1)}\|_{Q,2},\mathcal{F}_{1,1}^{(1)},\|\cdot\|_{Q,2})\\
&\le\sup_{Q}\log\Bigg(\sum\limits_{k=1}^{\binom{p}{Cs}}N(\varepsilon\|F_{1,1}^{(1)}\|_{Q,2},\mathcal{F}_{1,1,k}^{(1)},\|\cdot\|_{Q,2})\Bigg)\\
&\le\log\Bigg( \underbrace{\binom{p}{Cs}}_{\le \big(\frac{e\cdot p}{Cs}\big)^{Cs}} K(Cs+2)(16e)^{Cs+2}\left(\frac{1}{\varepsilon}\right)^{2Cs+2}\Bigg)\\
&\le\log\Bigg( \left(\frac{e\cdot p}{Cs}\right)^{Cs} K(Cs+2)(16e)^{Cs+2}\left(\frac{1}{\varepsilon}\right)^{2Cs+2}\Bigg)\\
&\lesssim s\log\Big(\frac{a_n}{\varepsilon}\Big)
\end{align*}
where $K$ is an universal constant and with an analogous argument 
$$\quad\sup_{Q}\log N(\varepsilon\|F_{1,1}^{(2)}\|_{Q,2},\mathcal{F}_{1,1}^{(2)},\|\cdot\|_{Q,2})\lesssim s\log\Big(\frac{a_n}{\varepsilon}\Big).$$
Using basic calculations on covering entropies (see for example Appendix N Lemma N.1 from Belloni et al. (2014) \cite{belloni2014uniform}) we can bound the covering entropy of the class $\mathcal{F}_{1,1}$ by
\begin{align*}
&\sup_{Q}\log N(\varepsilon\|F_{1,1}^{(1)}F_{1,1}^{(2)}\|_{Q,2},\mathcal{F}_{1,1},\|\cdot\|_{Q,2})\\
\le\quad&\sup_{Q}\log N\Big(\frac{\varepsilon}{2}\|F_{1,1}^{(1)}\|_{Q,2},\mathcal{F}_{1,1}^{(1)},\|\cdot\|_{Q,2}\Big)\\
&\quad+\sup_{Q}\log N\Big(\frac{\varepsilon}{2}\|F_{1,1}^{(2)}\|_{Q,2},\mathcal{F}_{1,1}^{(2)},\|\cdot\|_{Q,2}\Big)\\
\lesssim\quad & s\log\Big(\frac{a_n}{\varepsilon}\Big)
\end{align*}
where $F_{1,1}:=F_{1,1}^{(1)}F_{1,1}^{(2)}$ is an envelope for $\mathcal{F}_{1,1}$ with 
$$\|F_{1,1}\|_{P,q}\le \Big(\big\|\big(F_{1,1}^{(1)}\big)^2\big\|_{P,2q}\big\|\big(F_{1,1}^{(1)}\big)^2\big\|_{P,2q}\Big)^{1/2}\lesssim \log(a_n).$$
Additionally define
\begin{align*}
\mathcal{F}_{1,2}:&=\Big\{\psi_{m_r}(\cdot,\theta,\eta_{m_r}):r\in\{1,\dots,d\},\theta\in\Theta_{m_r}\Big\}.
\end{align*}
With the same argument as above $\mathcal{F}_{1,2}$ is a union over $d$ VC-subgraph classes with VC indices less or equal to $3$ implying
\begin{align*}
\sup_{Q}\log N(\varepsilon\|F_{1,2}\|_{Q,2},\mathcal{F}_{1,2},\|\cdot\|_{Q,2})\le C\log\Big(\frac{d}{\varepsilon}\Big)\lesssim \log\Big(\frac{a_n}{\varepsilon}\Big)
\end{align*}
where the envelope $F_{1,2}$ of $\mathcal{F}_{1,2}$ obeys
\begin{align*}
\|F_{1,2}\|_{P,q}\lesssim \log(a_n)
\end{align*}
with an analogous argument as above. Combining these results we obtain 
\begin{align*}
&\sup_{Q}\log N(\varepsilon\|F_{1}\|_{Q,2},\mathcal{F}_{1},\|\cdot\|_{Q,2})\\
=&\quad \sup_{Q}\log N(\varepsilon\|F_{1,1}^{(1)}F_{1,1}^{(2)}\vee F_{1,2}\|_{Q,2},\mathcal{F}_{1,1}\cup\mathcal{F}_{1,2},\|\cdot\|_{Q,2})\\
\le &\quad \sup_{Q}\log N(\varepsilon\|F_{1,1}^{(1)}F_{1,1}^{(2)}\|_{Q,2},\mathcal{F}_{1,1},\|\cdot\|_{Q,2})\\
&\quad\quad+\sup_{Q}\log N(\varepsilon\|F_{1,2}\|_{Q,2},\mathcal{F}_{1,2},\|\cdot\|_{Q,2})\\
\lesssim&\quad s\log\Big(\frac{a_n}{\varepsilon}\Big)
\end{align*}
where the envelope $F_{1}:= F_{1,1}^{(1)}F_{1,1}^{(2)}\vee F_{1,2}$ of $\mathcal{F}_{1}$ satisfies 
$$\|F_{1}\|_{P,q}\lesssim \log(a_n)$$
which gives us Assumption 2.2 (iv). Observe that for all $f\in \mathcal{F}_1$ we have
\begin{align*}
\E[f^2]^{1/2}&\le\sup\limits_{r,\theta,\eta^{(1)}}\E\left[(X_j-\theta X_k-\eta^{(1)}X_{-m_r})^4\right]^{1/4}\sup\limits_{r,\eta^{(2)}}\E\left[(X_k-\eta^{(2)}X_{-m_r})^4\right]^{1/4}\\
&\lesssim \sup\limits_{\|\xi\|_2=1}\E\left[(\xi X)^4\right]^{1/2}\lesssim C
\end{align*}
and 
$$\E[f^2]^{1/2}=\E\Big[(\underbrace{X_j-\theta X_k-\eta^{(1)}X_{-m_r}}_{=:Z_1})^2(\underbrace{X_k-\eta^{(2)}X_{-m_r}}_{=:Z_2})^2\Big]^{1/2}.$$
For each $Z_i$ with $i\in\{1,2\}$ we have
$$E[Z_i^2]\gtrsim \inf\limits_{\|\xi\|_2=1}\E\left[(\xi X)^2\right]\ge c.$$
Therefore $Z_1$ and $Z_2$ are both centered normal distributed random variables where the variance is bounded away from zero. This implies
$$E[Z_1^2Z_2^2]^{1/2}\ge c>0$$
which gives us Assumption 2.2 (v).\\
Assumption 2.2 (vi) (a) holds by construction of $\tau_n$ and $v_n\lesssim s$. Due to the growth condition \ref{A2} we can choose $q=2\tilde{q}/(1-\kappa)$ such that
\begin{align*}
n^{-1/2+1/q}s\log^{2}(a_n)&=n^{\frac{1-\kappa}{2\tilde{q}}} n^{-1/2}s\log^{2}(a_n)\\
&=n^{-\frac{\kappa}{2\tilde{q}}}\left(n^{\frac{1}{\tilde{q}}} \frac{s^2\log^{4}(a_n)}{n}\right)^{1/2}\lesssim n^{-\frac{\kappa}{2\tilde{q}}}.
\end{align*}
Additionally we have
\begin{align*}
C\tau_n(s\log(a_n))^{1/2}\lesssim\frac{s\log(a_n)}{\sqrt{n}}\lesssim n^{-\frac{1}{2\tilde{q}}},
\end{align*}
\begin{align*}
\log^{1/2}(d)\frac{\log(n)}{\sqrt{n}}(s\log(a_n))^{1/2}\lesssim\sqrt{\frac{s\log^4(a_n)}{n}}\lesssim n^{-\frac{1}{2\tilde{q}}}
\end{align*}
and
$$n^{1/2}\tau_n^2=\frac{s\log(a_n)}{\sqrt{n}} \lesssim n^{-\frac{1}{2\tilde{q}}}$$ 
which gives us Assumption 2.2 (vi) (b) and (c) with $\delta_n=n^{-\frac{\kappa}{2\tilde{q}}}$.
Define the class
$$\mathcal{F}_0:=\{\bar{\psi}_{m_r}(\cdot):r=1,\dots,d\}$$
where $\bar{\psi}_{m_r}(\cdot):=-\sigma_{m_r}^{-1}J_{m_r}^{-1}\psi_{m_r}(\cdot,\theta_{m_r},\eta_{m_r})$ with
$\sigma_{m_r}^2:=J_{m_r}^{-2}\E[\psi_{m_r}^2(X,\theta_{m_r},\eta_{m_r})]$.
Observe that by the Cauchy-Schwarz Inequality for any $q>0$ the envelope $F_0$ for $\mathcal{F}_0$ satisfies
\begin{align*}
\|F_0\|_{P,q}&=\E\left[\sup\limits_{r=1,\dots,d}\Big(\E[(\varepsilon^{(m_r)}\nu^{(m_r)})^2]^{-1/2}|\varepsilon^{(m_r)}\nu^{(m_r)}|\Big)^q\right]^{1/q}\\
&\lesssim \E\left[\sup\limits_{r=1,\dots,d}\big(|\varepsilon^{(m_r)}\nu^{(m_r)}|\big)^q\right]^{1/q}\\
&\lesssim \log(d).
\end{align*}
Since $|\mathcal{F}_0|=d$ we have
$$\sup\limits_Q\log N\big(\varepsilon\|F_0\|_{Q,2},\mathcal{F}_0,\|\cdot\|_{Q,2}\big)\le \log\Big(\frac{d}{\varepsilon}\Big)$$
for all $<\varepsilon\le 1$. Therefore Assumption 2.3 (i) is satisfied with $\varrho_n=1$ and $A_n=d\vee n$. Since the errors are centered normal distributed random variables with a uniformly bounded variance we have $E\big[\big(\varepsilon^{(m_r)}\big)^8\big]\lesssim C$ and $E\big[\big(\nu^{(m_r)}\big)^8\big]\lesssim C$. This implies
$\E[f^4]\le C$ for all $f\in\mathcal{F}_0$ which gives us Assumption 2.3 (ii). 
The growth condititons from corollary $2.1$ are satisfied due to Condition \ref{A2}. Observe that
$$\delta_n^2 \log(n\vee d) \lesssim n^{-\frac{\kappa}{\tilde{q}}}\log(n\vee d)=o(1),$$
$$\log^{2/7}(d) \log(n\vee d) =o(n^{1/7})$$
and we can find a $q$ such that
$$\log^{2/3}(d) \log(n\vee d) =o(n^{1/3-2/(3q)}).$$
Now, we verify Assumption 2.4.  Define
\begin{align*}
\tilde{\psi}_{m_r}(X,\eta^{(2)}):&=-X_k(X_k-\eta^{(2)}X_{-m_r})
\end{align*}
and
\begin{align*}
\tilde{m}_{m_r}(\eta^{(2)}):&= \E[\tilde{\psi}_{m_r}(X,\eta^{(2)})],
\end{align*}
where $\hat{J}_{m_r}=-\E_n[\tilde{\psi}_{m_r}(X,\hat{\eta}^{(2)})]$. It holds
\begin{align*}
|\hat{J}_{m_r}-J_{m_r}|\le |\hat{J}_{m_r}-\tilde{m}_{m_r}(\hat{\eta}^{(2)})|+|\tilde{m}_{m_r}(\hat{\eta}^{(2)})-\tilde{m}_{m_r}(\eta^{(2)}_{m_r})|
\end{align*}
with
\begin{align*}
|\tilde{m}_{m_r}(\hat{\eta}^{(2)})-\tilde{m}_{m_r}(\eta^{(2)}_{m_r})|
&=|\E[X_k(\hat{\eta}_{m_r}^{(2)}-\eta_{m_r}^{(2)})X_{-m_r}]|\\
&=||\hat{\eta}_{m_r}^{(2)}-\eta_{m_r}^{(2)}||_2\left|\E\left[X_k\left(\frac{(\hat{\eta}_{m_r}^{(2)}-\eta_{m_r}^{(2)})}{||\hat{\eta}_{m_r}^{(2)}-\eta_{m_r}^{(2)}||_2}X_{-m_r}\right)\right]\right|\\
&\lesssim ||\hat{\eta}_{m_r}^{(2)}-\eta_{m_r}^{(2)}||_2\lesssim \tau_n.
\end{align*}
Let $$\tilde{\mathcal{G}}_1:=\{X\mapsto \tilde{\psi}_{m_r}(X,\eta^{(2)}):r=1,\dots,d, \eta^{(2)}\in\mathcal{T}^*_{m_r,2}\}$$
with
\begin{align*}
\sup\limits_r|\hat{J}_{m_r}-J_{m_r}|\lesssim\sup\limits_{g\in\tilde{\mathcal{G}}_1}|\E_n[g(X)]-\E[g(X)]|+\tau_n.
\end{align*}
The class $\tilde{\mathcal{G}}_1$ has an envelope $\tilde{G}_1$ with
\begin{align*}
\E[\tilde{G}_1^q]^{1/q}&\le\E\left[\sup\limits_r\sup\limits_{\eta^{(2)}\in\mathcal{T}^*_{m_r,2}}|X_k^q(X_k-\eta^{(2)}X_{m_r})^q|\right]^{1/q}\\
&\le||\sup\limits_r X_k||_{P,2q}\E\left[\sup\limits_{r,\eta^{(2)}\in\mathcal{T}^*_{m_r,2}}(X_k-\eta^{(2)}X_{m_r})^{2q}\right]^{1/2q}\\
&\lesssim \log^{\frac{1}{2}}(d)\left(||\sup\limits_r\nu^{(m_r)}||_{P,2q}\vee\E\left[\sup\limits_{r,\eta^{(2)}\in\mathcal{T}^*_{m_r,2}}((\eta^{(2)}_{m_r}-\eta^{(2)})X_{m_r})^{2q}\right]^{1/2q}\right)\\
&\lesssim \log^{\frac{1}{2}}(d)\left(\log^{\frac{1}{2}}(d)\vee\sqrt{s}\tau_n \sup\limits_rE\left[||X_{m_r}||_\infty^{2q}\right]^{1/2q}\right)\\
&\lesssim \log(a_n).
\end{align*}
for all $q$. With similar arguments as in the verification of Assumption 2.2. (iv), we obtain
$$\sup\limits_Q\log N\big(\varepsilon\|\tilde{G}_1\|_{Q,2},\mathcal{G}_1,\|\cdot\|_{Q,2}\big)\lesssim s\log\Big(\frac{a_n}{\varepsilon}\Big).$$
Therefore, by Lemma O.2, it holds
\begin{align*}
\sup\limits_r|\hat{J}_{m_r}-J_{m_r}|&\lesssim K\left(\sqrt{\frac{s\log(a_n)}{n}}+n^{1/q}\frac{s\log^2(a_n)}{n}\right)+\tau_n\\
&=o\left(\log^{-\frac{3}{2}}(a_n)\right)
\end{align*}
with probability not less then $1-o(1)$.
Next we want to show that
$$\E_n[\psi_{m_r}^2(X,\hat{\theta}_{m_r},\hat{\eta}_{m_r})]-\E[\psi_{m_r}^2(X,\theta_{m_r},\eta_{m_r})]=o_P(\log^{-1}(a_n)).$$
By the triangle inequality we have
\begin{align*}
&|\E_n[\psi_{m_r}^2(X,\hat{\theta}_{m_r},\hat{\eta}_{m_r})]-\E[\psi_{m_r}^2(X,\theta_{m_r},\eta_{m_r})]|\\
\le\quad &|\E_n[\psi_{m_r}^2(X,\hat{\theta}_{m_r},\hat{\eta}_{m_r})]-\E[\psi_{m_r}^2(X,\hat{\theta}_{m_r},\hat{\eta}_{m_r})]|\\
\quad &+ |\E[\psi_{m_r}^2(X,\hat{\theta}_{m_r},\hat{\eta}_{m_r})-\psi_{m_r}^2(X,\theta_{m_r},\eta_{m_r})]|\\
\le\quad &|\E_n[\psi_{m_r}^2(X,\hat{\theta}_{m_r},\hat{\eta}_{m_r})]-\E[\psi_{m_r}^2(X,\hat{\theta}_{m_r},\hat{\eta}_{m_r})]|\\
\quad &+ \E[(\psi_{m_r}(X,\hat{\theta}_{m_r},\hat{\eta}_{m_r})+\psi_{m_r}(X,\theta_{m_r},\eta_{m_r}))^2]^{1/2}\\
&\quad\cdot\E[(\psi_{m_r}(X,\hat{\theta}_{m_r},\hat{\eta}_{m_r})-\psi_{m_r}(X,\theta_{m_r},\eta_{m_r}))^2]^{1/2}\\
\le\quad &|\E_n[\psi_{m_r}^2(X,\hat{\theta}_{m_r},\hat{\eta}_{m_r})]-\E[\psi_{m_r}^2(X,\hat{\theta}_{m_r},\hat{\eta}_{m_r})]|\\
\quad &+ C(|\theta_{m_r}-\hat{\theta}_{m_r}|\vee\|\eta_{m_r}-\hat{\eta}_{m_r}\|_e)
\end{align*}
due to 2.1(a) and 2.2(v).
Observe that with probability $1-o(1)$ $$\sup\limits_r|\hat{\theta}_{m_r}-\theta_{m_r}|\lesssim \tau_n=o(\log^{-1}(a_n))$$ due to Appendix B from Belloni et al. (2018) \cite{belloni2018uniformly}.
Since the class 
$$\tilde{\mathcal{G}}_2:=\Big\{\psi_{m_r}(\cdot,\theta,\eta):r\in\{1,\dots,d\},|\theta-\theta_{m_r}|\le C\tau_n,\eta\in\mathcal{T}^*_{m_r}\Big\}\subseteq \mathcal{F}_{1,1}$$
we obtain the same entropy bounds as for $\mathcal{F}_{1,1}$ implying 
$$\sup_{Q}\log N(\varepsilon\|\tilde{G}_2^2\|_{Q,2},\tilde{\mathcal{G}}_2^2,\|\cdot\|_{Q,2})\lesssim s\log\Big(\frac{a_n}{\varepsilon}\Big)$$
where $\tilde{G}_2^2$ is a measurable envelope of $\tilde{\mathcal{G}}_2^2$ with
\begin{align*}
\|\tilde{G}_2^2\|_{P,q}&\le \|\big(F_{1,1}\big)^2\big\|_{P,q}\\
&\le\Big(\big\|\big(F_{1,1}^{(1)}\big)^4\big\|_{P,q}\big\|\big(F_{1,1}^{(2)}\big)^4\big\|_{P,q}\Big)^{1/2}\\
&\lesssim \log^2(a_n)
\end{align*}
due to $\big\|\big(F_{1,1}^{(1)}\big)^4\big\|_{P,q}\lesssim\log^2(a_n)$ and $\big\|\big(F_{1,1}^{(2)}\big)^4\big\|_{P,q}\lesssim\log^2(a_n)$.
For all $g\in \tilde{\mathcal{G}}_2^2$ we have
\begin{align*}
&\sup_{g\in\tilde{\mathcal{G}}_2^2}\E[g(X)^2]^{1/2}\\ \quad&\le\sup\limits_{r,\theta,\eta^{(1)}}\E\Big[(X_j-\theta X_k-\eta^{(1)}X_{-m_r})^8\Big]^{1/4}\sup\limits_{r,\eta^{(2)}}\E\Big[(X_k-\eta^{(2)}X_{-m_r})^8\Big]^{1/4}\\
&\lesssim \sup\limits_{||\xi||_2=1}\E\Big[(\xi X)^8\Big]^{1/2}\le C.
\end{align*}
Therefore we can find a $q>4$ such that with probability $1-o(1)$
\begin{align*}
\sup\limits_{g\in\tilde{\mathcal{G}}_2^2}|\E_n[g(X)]-\E[g(X)]|&\le K\left(\sqrt{\frac{s\log(a_n)}{n}}+n^{1/q}\frac{s\log^3(a_n)}{n}\right)\\
&=o(\log^{-1}(a_n))
\end{align*}
which implies
$$\E_n[\psi_{m_r}^2(X,\hat{\theta}_{m_r},\hat{\eta}_{m_r})]-\E[\psi_{m_r}^2(X,\theta_{m_r},\eta_{m_r})]=o_P(\log^{-1}(a_n)).$$
Since $1\lesssim \sigma^2_{m_r}\lesssim 1$ due to Assumption 2.1 (iv) and 2.2 (v), we have
\begin{align*}
\left|\frac{\hat{\sigma}_{m_r}}{\sigma_{m_r}}-1\right|&\le \left|\frac{\hat{\sigma}^2_{m_r}}{\sigma^2_{m_r}}-1\right|\\
&\lesssim \Big|\hat{\sigma}^2_{m_r}-\sigma^2_{m_r}\Big|\\
&\le \Big|\hat{J}_{m_r}^{-2}-J_{m_r}^{-2}\Big|\E_n[\psi_{m_r}^2(X,\hat{\theta}_{m_r},\hat{\eta}_{m_r})]\\
&\quad + J_{m_r}^{-2}|\E_n[\psi_{m_r}^2(X,\hat{\theta}_{m_r},\hat{\eta}_{m_r})]-\E[\psi_{m_r}^2(X,\theta_{m_r},\eta_{m_r})]|\\
&\lesssim \Big|\hat{J}_{m_r}-J_{m_r}\Big|+|\E_n[\psi_{m_r}^2(X,\hat{\theta}_{m_r},\hat{\eta}_{m_r})]-\E[\psi_{m_r}^2(X,\theta_{m_r},\eta_{m_r})]|\\
&=o_P(\log^{-1}(a_n))
\end{align*}
uniformly over all $r=1,\dots,d$ which gives us Assumption $2.4$ with $\Delta_n=o(1)$ and $\varepsilon_n=o(\log^{-1}(a_n))$. Next, we show the Assumption 2.3 (iii). The entropy conditions of the class $$\hat{\mathcal{F}}_0=\{\bar{\psi}_{m_r}(\cdot)-\hat{\psi}_{m_r}(\cdot):r=1,\dots,d\}$$
holds by construction with $\bar{A}_n=d\vee n$ and $\bar{\varrho}=1$. Further it holds for all $f\in\hat{\mathcal{F}}_0$
\begin{align*}
||f||_{P_n,2}&=||\hat{\sigma}^{-1}_{m_r}\hat{J}^{-1}_{m_r}\psi_{m_r}(X,\hat{\theta}_{m_r},\hat{\eta}_{m_r})-\sigma^{-1}_{m_r}J^{-1}_{m_r}\psi_{m_r}(X,\theta_{m_r},\eta_{m_r})||_{P_n,2}\\
&\le |\hat{\sigma}^{-1}_{m_r}\hat{J}^{-1}_{m_r}-\sigma^{-1}_{m_r}J^{-1}_{m_r}|\cdot||\psi_{m_r}(X,\theta_{m_r},\eta_{m_r})||_{P_n,2}\\
&\quad+\hat{\sigma}^{-1}_{m_r}\hat{J}^{-1}_{m_r}||\psi_{m_r}(X,\hat{\theta}_{m_r},\hat{\eta}_{m_r})-\psi_{m_r}(X,\theta_{m_r},\eta_{m_r})||_{P_n,2}\\
&:=I+II
\end{align*}
To bound the first term, observe that uniformly over all $r=1,\dots,d$
\begin{align*}
&\quad|\hat{\sigma}^{-1}_{m_r}\hat{J}^{-1}_{m_r}-\sigma^{-1}_{m_r}J^{-1}_{m_r}|=o_P(\log^{-1}(a_n))
\end{align*}
since $1\lesssim J_{m_r}\lesssim 1$ and $1\lesssim\sigma_{m_r}\lesssim 1$. Define the class
$$\tilde{\mathcal{G}}_3:=\{\psi_{m_r}^2(\cdot,\theta_{m_r},\eta_{m_r}):r=1,\dots,d\}$$
with cardinality $|\tilde{\mathcal{G}}_3|=d$ and an envelope $\tilde{G}_3$ that fulfills 
\begin{align*}
||\tilde{G}_3||_{P,q}\le \E\left[\sup\limits_r \left(\varepsilon^{(m_r)}\nu^{(m_r)}\right)^{2q}\right]^{1/q}\lesssim \log^2(d).
\end{align*}
Remark that
$$\sup\limits_r||\psi_{m_r}(X,\theta_{m_r},\eta_{m_r})||_{P_n,2}\le \left(\frac{1}{\sqrt{n}}\sup\limits_{g\in\tilde{G}_3}\mathbb{G}_n(g)+\sup\limits_r\E[\psi_{m_r}^2(X,\theta_{m_r},\eta_{m_r})]\right)^{\frac{1}{2}}$$
with $\sup\limits_r\E[\psi_{m_r}^2(X,\theta_{m_r},\eta_{m_r})]\le C$ and 
\begin{align*}
\frac{1}{\sqrt{n}}\sup\limits_{g\in\tilde{G}_3}\mathbb{G}_n(g)\lesssim K\left(\sqrt{\frac{\log (a_n)}{n}}+n^{1/q}\frac{\log^3(a_n)}{n}\right)=o(1)
\end{align*}
with probability $1-o(1)$. This implies $$I=o_P\left(\log^{-1}(a_n)\right)$$ uniformly over all $r=1,\dots,d$. To bound the second term, define the class
\begin{align*}
\tilde{\mathcal{G}}_4:=\{\psi_{m_r}(\cdot,\theta,\eta)-\psi_{m_r}(\cdot,\theta_{m_r},\eta_{m_r})&:r=1,\dots,d,\\
&|\theta-\theta_{m_r}|\le C\tau_n,\eta\in\mathcal{T}_{m_r}\}
\end{align*}
for a sufficiently large constant $C>0$. Due to Assumption 2.2 (i) we have that 
$$\psi_{m_r}(X,\hat{\theta}_{m_r},\hat{\eta}_{m_r})-\psi_{m_r}(X,\theta_{m_r},\eta_{m_r})\in \tilde{\mathcal{G}}_4$$ 
with probability $1-o(1)$. Since $\tilde{\mathcal{G}}_4^2\subseteq (\mathcal{F}_1-\mathcal{F}_1)^2$ the covering numbers obey
$$\sup\limits_Q\log N\big(\varepsilon\|\tilde{G}_4^2\|_{Q,2},\tilde{\mathcal{G}}_4^2,\|\cdot\|_{Q,2}\big)\lesssim s\log\Big(\frac{a_n}{\varepsilon}\Big)$$
and the envelope 
$$\tilde{G}_4^2=\sup\limits_{r=1,\dots,d}\sup\limits_{|\theta-\theta_{m_r}|\le C\tau_n}\sup\limits_{\eta\in\mathcal{T}_{m_r}}\left(\psi_{m_r}(\cdot,\theta,\eta)-\psi_{m_r}(\cdot,\theta_{m_r},\eta_{m_r})\right)^2$$ 
satisfies 
\begin{align*}
&\quad\|\tilde{G}_4^2\|_{P,q}\\
&\lesssim \|\sup\limits_{r,\theta,\eta^{(2)}}\left((\theta_{m_r}-\theta)X_k(X_k-\eta^{(2)}X_{-m_r})\right)^2\|_{P,q}\\
&\quad + \|\sup\limits_{r,\eta^{(1)},\eta^{(2)}}\left((X_j-\theta_{m_r}X_k-\eta^{(1)}X_{-m_r})(\eta^{(2)}_{m_r}-\eta^{(2)})X_{-m_r}\right)^2\|_{P,q}\\
&\quad + \|\sup\limits_{r,\eta^{(1)}}\left((X_k-\eta^{(2)}_{m_r}X_{-m_r})(\eta^{(1)}_{m_r}-\eta^{(1)})X_{-m_r}\right)^2\|_{P,q}\\
&:=T_1+T_2+T_3
\end{align*} 
with
\begin{align*}
T_1&\lesssim \tau_n^2 \|\sup\limits_{r,\eta^{(2)}}\left(X_k(X_k-\eta^{(2)}X_{-m_r})\right)^2\|_{P,q}\\
&\lesssim\tau_n^2\|\sup\limits_{r}X_k^2\|_{P,2q}\|\sup\limits_{r,\eta^{(2)}}(X_k-\eta^{(2)}X_{-m_r})^2\|_{P,2q}\\
&\lesssim \frac{s\log(a_n)}{n} \log(d)^2=o(\log^{-1}(a_n)),
\end{align*}
\begin{align*}
T_2&\lesssim \|\sup\limits_{r,\eta^{(2)}}((\eta^{(2)}_{m_r}-\eta^{(2)})X_{-m_r})^2\|_{P,2q}\|\sup\limits_{r,\eta^{(1)}}(X_j-\theta_{m_r}X_k-\eta^{(1)}X_{-m_r})^2\|_{P,2q}\\
&\lesssim s\tau_n^2\|\sup\limits_{r}\|X_{-m_r}\|_\infty^2\|_{P,2q}\log(d)\\
&\lesssim \frac{s^2\log(a_n)}{n}\log(a_n)\log(d)=o(\log^{-1}(a_n))
\end{align*}
and
\begin{align*}
T_3&\lesssim \|\sup\limits_{r,\eta^{(1)}}((\eta^{(1)}_{m_r}-\eta^{(1)})X_{-m_r})^2\|_{P,2q}\|\sup\limits_{r}(\nu_{m_r})^2\|_{P,2q}\\
&\lesssim s\tau_n^2\|\sup\limits_{r}\|X_{-m_r}\|_\infty^2\|_{P,2q}\log(d)=o(\log^{-1}(a_n)).
\end{align*}
Since 
$$\sigma:=\left(\sup\limits_{g\in \tilde{\mathcal{G}}_4^2}\E[g^2]\right)^{1/2}\lesssim \frac{s^2\log(a_n)}{n}=o(\log^{-3}(a_n))$$
it holds 
\begin{align*}
\frac{1}{\sqrt{n}}\sup\limits_{g\in\tilde{G}_4^2}\mathbb{G}_n(g)&\lesssim K\left(\sigma\sqrt{\frac{s\log (a_n)}{n}}+n^{1/q}\|\tilde{G}_4^2\|_{P,q}\frac{s\log(a_n)}{n}\right)
\\&=o(\log^{-4}(a_n))
\end{align*}
with probability $1-o(1)$. 
Hence,
\begin{align*}
&||\psi_{m_r}(X,\hat{\theta}_{m_r},\hat{\eta}_{m_r})-\psi_{m_r}(X,\theta_{m_r},\eta_{m_r})||_{P_n,2}\\
\le\quad& \left(\frac{1}{\sqrt{n}}\sup\limits_{g\in\tilde{G}_4^2}\mathbb{G}_n(g)+\sup\limits_{g\in\tilde{G}_4^2}\E[g(X)]\right)^{\frac{1}{2}}=o(\log^{-3/2}(a_n))
\end{align*}
with probability $1-o(1)$ due to Assumption 2.1 (v) (a).\\ \\
This gives us $II=o_p\left(\log^{-1}(a_n)\right)$ with probability $1-o(1)$ implying Assumption 2.3 (iii) with $\bar{\delta_n}=o(\log^{-1}(a_n))=o(1)$. It is straightforward to see that the growth conditions of Corollary 2.2 hold.
\end{proof}


\section{Uniform nuisance function estimation}\label{uniformestimation}
\noindent Consider the following linear regression model
$$ Y_r=\sum\limits_{j=1}^p \beta_{r,j} X_{r,j}+\varepsilon_{r}=\beta_r X_r+\varepsilon_r$$ 
with centered regressors and errors $\varepsilon_r$ with $\E[\varepsilon_r]=0$ for each $r =1,\dots,d$. The true parameter obeys
$$\beta_r\in \arg\min_{\beta}\E[(Y_r-\beta X_r)^2]$$
with
$$\beta_r=\beta^{(1)}_r+\beta^{(2)}_r.$$
The parameter $\beta^{(2)}_r$ is the approximate sparse part of the true regression coefficient that captures the misspecification of a sparse model. We show that the lasso, post-lasso and square-root lasso estimators have sufficiently fast estimation rates uniformly for all $r=1,\dots,d$. In this setting $d=d_n$ is explicitly allowed to grow with $n$. In the following analysis, the regressors and errors need to have at least subexponential tails. In this context, we define the Orlicz norm $\|X\|_{\Psi_\rho}$ as
\begin{align*}
\|X\|_{\Psi_\rho}=\inf\{C>0:\E[\Psi_\rho(|X|/C)]\le 1\}
\end{align*} 
with $\Psi_\rho(x)=\exp(x^{\rho})-1$.


\subsection{Uniform lasso estimation}\label{uniformla}
Define the weighted lasso estimator
\begin{align*}
 \hat{\beta}_r\in \arg\min\limits_{\beta}\left(\frac{1}{2}\E_n\left[\left(Y_r-\beta X_r\right)^2\right]+\frac{\lambda}{n}\|\hat{\Psi}_{r,m}\beta\|_1\right)
\end{align*}
with the penalty level
\begin{align*}
\lambda = c_\lambda\sqrt{n}\Phi^{-1}\left(1-\frac{\gamma}{2pd}\right)
\end{align*}
for a suitable $c_\lambda >1$, $\gamma\in[1/n,1/\log(n)]$ and a fix $m\ge 0$. Define the post-regularized weighted least squares estimator as
\begin{align*}
\tilde{\beta}_r\in \arg\min\limits_{\beta}\left(\frac{1}{2}\E_n\left[\left(Y_r-\beta X_r\right)^2\right]\right):\quad \text{supp}(\beta)\subseteq \text{supp}(\hat{\beta}_r).
\end{align*}
The penalty loadings $\hat{\Psi}_{r,m}=\diag(\{\hat{l}_{r,j,m},j=1,\dots,p\})$ are defined by
$$\hat{l}_{r,j,0}=\max\limits_{1\le i\le n}||X_r^{(i)}||_\infty$$ 
for $m=0$ and for all $m\ge 1$ by the following algorithm:
\begin{algorithm}\label{penalg}
Set $\bar{m}=0$. Compute $\hat{\beta}_r$ based on $\hat{\Psi}_{r,\bar{m}}$. Set $\hat{l}_{r,j,\bar{m}+1}=\E_n\left[\left(\left(Y_r-\hat{\beta}_r X_r\right) X_{r,j}\right)^2\right]^{1/2}$. If $\bar{m}=m$ stop and report the current value of $\hat{\Psi}_{r,m}$, otherwise set $\bar{m}=\bar{m}+1$.
\end{algorithm}
\noindent Let $a_n:=\max(p,n,d,e)$. In order to establish uniform convergence rates, the following assumptions are required to hold uniformly in $n\ge n_0,P\in\mathcal{P}_n$:\\ \\
Assumptions \textbf{B1}-\textbf{B4}:
\begin{enumerate}[label=\textbf{B\arabic*},ref=B\arabic*]
\item\label{tails} \textbf{(Tail conditions)}\\
\begin{em}
There exists $1\le\rho\le 2$ such that 
$$\max\limits_{r=1,\dots,d}\max\limits_{j=1,\dots,p}\|X_{r,j}\|_{\Psi_\rho}\le C\ \text{and}\ \max\limits_{r=1,\dots,d}\|\varepsilon_{r}\|_{\Psi_\rho}\le C.$$
\end{em}
\item\label{eigen} \textbf{(Uniformly bounded eigenvalues)}\\
\begin{em}
For all $r=1,\dots,d_n$, it holds
\begin{align*}
\inf\limits_{\|\xi\|_2=1} \E\left[(\xi X_r)^2\right]\ge c \text{,} \sup\limits_{\|\xi\|_2=1} \E\left[(\xi X_r)^2\right]\le C
\end{align*}
and
\begin{align*}
\min\limits_{r=1,\dots,d}\min\limits_{j=1,\dots,p}\E[\varepsilon^2_rX_{r,j}^2]\ge c.
\end{align*}
\end{em}
\item\label{asparse} \textbf{(Uniform approximate sparsity)}\\
\begin{em}
The coefficients obey
$$\max\limits_{r=1,\dots,d}\|\beta^{(2)}_r\|_1^2\lesssim \sqrt{\frac{s^2\log(a_n)}{n}},\quad \max\limits_{r=1,\dots,d}\E\left[ (\beta^{(2)}_r X_r)^2 \right]\lesssim \frac{s\log(a_n)}{n}$$
and
$$ \max\limits_{r=1,\dots,d}\|\beta^{(1)}_r\|_0\le s.$$ 
\end{em}
\item\label{growthc} \textbf{(Growth conditions)}\\
\begin{em}
There exists a positive number $\tilde{q}>0$ such that the following growth condition is fulfilled:
\begin{align*}
n^{\frac{1}{\tilde{q}}}\frac{s\log^{1+\frac{4}{\rho}}(a_n)}{n}=o(1).
\end{align*}
\end{em}
\end{enumerate}
\begin{theorem}\label{uniformlasso}
Under the assumptions \ref{tails}-\ref{growthc} the lasso estimator $\hat{\beta}_r$ obeys uniformly over all $P\in \mathcal{P}_n$ with probability $1-o(1)$
\begin{align}
\max\limits_{r=1,\dots,d}\|\hat{\beta}_r-\beta^{(1)}_r\|_2&\le C\sqrt{\frac{s\log(a_n)}{n}},\\ \max\limits_{r=1,\dots,d}\|\hat{\beta}_r-\beta^{(1)}_r\|_1&\le C\sqrt{\frac{s^2\log(a_n)}{n}}
\end{align}
with
\begin{align}
\max\limits_{r=1,\dots,d}\|\hat{\beta}_r\|_0\le C s.
\end{align}
Additionally the post-lasso estimator $\tilde{\beta}_r$ obeys uniformly over all $P\in \mathcal{P}_n$ with probability $1-o(1)$
\begin{align}
\max\limits_{r=1,\dots,d}\|\tilde{\beta}_r-\beta^{(1)}_r\|_2&\le C\sqrt{\frac{s\log(a_n)}{n}},\\ \max\limits_{r=1,\dots,d}\|\tilde{\beta}_r-\beta^{(1)}_r\|_1&\le C\sqrt{\frac{s^2\log(a_n)}{n}}.
\end{align}
\end{theorem}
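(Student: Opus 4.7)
The plan is to run the standard lasso oracle-inequality argument under approximate sparsity (as in Belloni–Chernozhukov–Hansen for a single regression) and to make every step uniform over $r=1,\dots,d$ by pairing moderate-deviation type maximal inequalities (Lemma O.2 of \cite{belloni2015uniformly}) with the growth condition \ref{B3}. The three core ingredients are: (i) an event on which the penalty level dominates the empirical score uniformly in $r$, $j$; (ii) a uniform sparse/restricted eigenvalue condition for the empirical Gram matrices $\mathbb{E}_n[X_r X_r^T]$; (iii) uniform two–sided control of the data driven penalty loadings $\hat l_{r,j,m}$ around the population quantities $l_{r,j}:=\E[\varepsilon_r^2 X_{r,j}^2]^{1/2}$.

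For (i), write $Y_r - \beta^{(1)}_r X_r = \varepsilon_r + \beta^{(2)}_r X_r$. By \ref{B2}, the variables $\varepsilon_r$ and $X_{r,j}$ are centered Gaussian with bounded variances, so $\varepsilon_r X_{r,j}$ is sub–exponential with bounded Orlicz norm uniformly in $r,j$. Applying Lemma O.2 to the class $\{(y,x)\mapsto \varepsilon_r x_j/l_{r,j}:r\le d,\, j\le p\}$ of cardinality $pd$, combined with the tail approximation $\Phi^{-1}(1-\gamma/(2pd))\asymp \sqrt{\log(pd/\gamma)}$ and \ref{B3}, gives
\[
\max_{r\le d}\max_{j\le p}\bigl|\mathbb{E}_n[(\varepsilon_r+\beta^{(2)}_r X_r)X_{r,j}]\bigr|\le \tfrac{1}{c_\lambda}\tfrac{\lambda}{n}\,l_{r,j}
\]
with probability $1-o(1)$, where the bias term is absorbed using $\max_r\E[(\beta^{(2)}_r X_r)^2]\lesssim s\log(a_n)/n$ from \ref{B1}. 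For the loadings, the initial crude choice $\hat l_{r,j,0}=\max_i\|X_r^{(i)}\|_\infty (\mathbb{E}_n Y_r^2)^{1/2}$ is an upper bound on $l_{r,j}$ up to a $\log(a_n)$ factor, so iterating Algorithm \ref{penalg} once (which is the content of the uniform consistency argument) yields $c\le \hat l_{r,j,m}/l_{r,j}\le C$ uniformly in $r,j$ with probability $1-o(1)$, again via a Lemma O.2 application to $\{(y,x)\mapsto (y-\hat\beta_r x)^2 x_j^2\}$.

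For (ii), since $X_r$ is Gaussian with covariance having eigenvalues in $[c,C]$ uniformly in $r$ by \ref{B2}, Rudelson–Zhou / Belloni–Chernozhukov type sparse eigenvalue bounds give $\kappa_{\min}(\mathbb{E}_n[X_r X_r^T];\, m s)\gtrsim c$ and $\kappa_{\max}\lesssim C$ on $O(s)$–sparse vectors, uniformly in $r$, provided $s\log(a_n)/n=o(1)$, which is contained in \ref{B3}. Combining (i), (ii) and the loading control, the standard oracle inequality (Theorem 2 of Belloni–Chernozhukov or Theorem 1 of Belloni–Chen–Chernozhukov–Hansen, in its approximate–sparse form) applied for each fixed $r$ on the joint probability–one event delivers the $\ell_2$ rate $\sqrt{s\log(a_n)/n}$, the $\ell_1$ rate $\sqrt{s^2\log(a_n)/n}$ and the post–selection sparsity $\|\hat\beta_r\|_0\lesssim s$, all uniformly in $r$. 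The post–lasso bound follows because on the same event the selected model has size $\lesssim s$, sparse eigenvalues are bounded below, and the noise projected on any $O(s)$–dimensional subspace is of order $\sqrt{s\log(a_n)/n}$ uniformly in $r$ by another Lemma O.2 step.

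The main obstacle is not any individual inequality but making every previously–one–regression bound valid simultaneously over $r\le d$; this is why \ref{B3} carries the $n^{1/\tilde q}$ factor and the $\log^4(a_n)$ powers. Concretely, the delicate point is controlling $\max_{r,j}|\mathbb{E}_n[(\beta^{(2)}_r X_r)X_{r,j}]|$ and $\max_r |\mathbb{E}_n[(\beta^{(2)}_r X_r)^2]-\E[(\beta^{(2)}_r X_r)^2]|$ at rate $o(\sqrt{\log(a_n)/n})$ without assuming exact sparsity; this is handled by noting $\|\beta^{(2)}_r\|_1\lesssim (s^2\log(a_n)/n)^{1/4}$ from \ref{B1} together with $\max_r\max_j\|X_{r,j}\|_{\psi_2}\lesssim 1$, and again invoking Lemma O.2. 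All remaining calculations are routine adaptations of the cited single–regression proofs.
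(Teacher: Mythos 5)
Your proposal is correct and follows essentially the same route as the paper: both reduce the problem to the weighted-lasso framework of Belloni et al.\ (2015) by verifying, uniformly over $r$ via the Lemma O.2 maximal inequalities and the growth condition \ref{B3}, that the penalty dominates the score, that the data-driven loadings are two-sided bounded (crude bound at $m=0$, then refined by the algorithm), and that sparse/restricted eigenvalues are uniformly controlled, before invoking the standard oracle inequalities for the lasso and post-lasso under approximate sparsity. The only cosmetic difference is that you cite Rudelson--Zhou-type results for the sparse eigenvalues where the paper uses Lemma P.1 of the same reference, and your stated sufficient condition $s\log(a_n)/n=o(1)$ for that step is looser than the actual requirement, though you correctly attribute the extra $n^{1/\tilde q}$ and $\log^4(a_n)$ factors in \ref{B3} to exactly this uniformity cost.
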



\subsection{Uniform square-root lasso estimation}\label{uniformsq}
Now, assume that $X_{r,j}$ are standardized covariates ($\E[X_{r,j}^2]=1$ for all $j=1,\dots,p$ and $r=1,\dots,d$) which are independent from the errors $\varepsilon_r$.
Define
$$Q_r(\beta):=\E_n[(Y_r-\beta X_r-\beta_r^{(2)}X_r)^2].$$ The square-root lasso estimator is definded as
$$\hat{\beta}_r\in \arg\min_{\beta}\left(\hat{Q}_r^{1/2}(\beta)+\frac{\lambda}{n}\| \beta\|_1\right),$$
where $\hat{Q}_r(\beta):=\E_n[(Y_r-\beta X_r)^2]$. $\hat{Q}_r(\beta)$ is a proxy for $Q_r(\beta)$ estimating the approximate sparse part $\beta^{(2)}_r$ by $\hat{\beta}^{(2)}_r=0$. Let
\begin{align}
\lambda=c'\sqrt{n}\Phi^{-1}\big(1-\gamma/(2pd)\big)\label{lambda}
\end{align}
where $1-\gamma$ is a confidence level associated with the probability of the event (\ref{event1}), and $c'>c $ is a slack constant. 
The first part of the analysis is to control the event
\begin{align}
\frac{\lambda}{n}\ge c\max_{r=1,\dots,d}\|S_r\|_{\infty},\label{event1}
\end{align}
where 
$$S_r := \partial_\beta Q^{1/2}(\beta)|_{\beta=\beta_r^{(1)}}=-\frac{\E_n[X_r(Y_u-\beta_r^{(1)} X_r-\beta_r^{(2)}X_r)]}{\sqrt{\E_n[(Y_u-\beta_r^{(1)} X_r-\beta_r^{(2)}X_r)^2]}}=-\frac{\E_n[X_r\varepsilon_r]}{\sqrt{\E_n[\varepsilon_r^2]}}$$ 
is the score of $Q^{1/2}$ at $\beta_r^{(1)}$. Define
$$\hat{S}_r := \partial_\beta \hat{Q}^{1/2}(\beta)|_{\beta=\beta_r^{(1)}}=-\frac{\E_n[X_r(\varepsilon_r+\beta_r^{(2)}X_r)]}{\sqrt{\E_n[(\varepsilon_r+\beta_r^{(2)}X_r)^2]}}.$$
The following conditions and lemma \ref{condwl} are essentially the same as condition WL and lemma L.4. in Belloni et al. (2018) \cite{belloni2018uniformly}.
Let $\underline{C}$ and $\overline{C}$ be some strictly positive constants. Additionally let $(\varphi_n)_{n\ge 1}$, $(\tilde{\varphi}_n)_{n\ge 1}$, $(\bar{\varphi}_n)_{n\ge 1}$ and $\Delta_n$ be some sequences of positive constants converging to zero.\\ \\
\textbf{Condition} \setword{\textbf{WL}}{WL}
\begin{em}
The following conditions hold:
\begin{enumerate}
\item $\max_{r=1,\dots,d}\max_{j=1,\dots,p}\left(\E\left[|X_{r,j}\varepsilon_r|^3\right]\right)^{1/3}\Phi^{-1}(1-\gamma/(2pd))\le \varphi_n n^{1/6}$;\label{WL1}
\item $\underline{C}\le \E\left[|X_{r,j}\varepsilon_r|^2\right]\le \overline{C}$, for all $r=1,\dots,d$ and $j=1,\dots,p$;\label{WL2}
\item with probability at least $1-\frac{1}{2}\Delta_n$,\label{WL3}
\begin{align*}
\max_{r=1,\dots,d}\max_{j=1,\dots,p} |\E_n[X_{r,j}^2\varepsilon_r^2]-\E[X_{r,j}^2\varepsilon_r^2]|\le \tilde{\varphi}_n
\end{align*}
and 
\begin{align*}
\max_{r=1,\dots,d} |\E_n[\varepsilon_r^2]-\E[\varepsilon_r^2]|\le \bar{\varphi}_n.
\end{align*}
\end{enumerate}
\end{em}
The following lemma proves that $\lambda$ satisfies (\ref{event1}) with high probability.
\begin{lemma}\label{condwl}
Suppose that condition \ref{WL} holds. In addition suppose that $\lambda$ satisfies (\ref{lambda}) for some $c'>c$ and $\gamma=\gamma_n\in[1/n,1/\log(n)]$. Then it holds
$$P\left(\frac{\lambda}{n}\ge c\max_{r=1,\dots,d}\|S_r\|_{\infty}\right)\ge 1-\gamma-o(\gamma)-\Delta_n.$$
\end{lemma}
Under the same uniform sparsity and regularity conditions as in theorem \ref{uniformlasso} we are able to show that condition \ref{WL} is satisfied and hence we can establish uniform convergence rates of the square-root lasso estimator. In section \ref{uniformsq} we additionally assumed independence between the regressors and the error terms. This eliminates the need to estimate the penalty loadings.
\begin{theorem}\label{uniform1}
Suppose that the conditions \ref{tails}-\ref{growthc} hold. In addition suppose that $\lambda$ satisfies (\ref{lambda}) for some $c'>c$ and $\gamma=\gamma_n\in[1/n,1/\log(n)]$. Then, with probability at least $1-o(1)$ we have
\begin{align}
\max\limits_{r=1,\dots,d}\|\hat{\beta}_r-\beta^{(1)}_r\|_2&\le C\sqrt{\frac{s\log(a_n)}{n}},\\ \max\limits_{r=1,\dots,d}\|\hat{\beta}_r-\beta^{(1)}_r\|_1&\le C\sqrt{\frac{s^2\log(a_n)}{n}}
\end{align}
with
\begin{align}
\max\limits_{r=1,\dots,d}\|\hat{\beta}_r\|_0\le Cs.
\end{align}
\end{theorem}
\ \\
\subsection{Proofs}\label{uniformproofs}
\begin{proof}[Proof of Theorem \ref{uniformlasso}]\ \\
Due to condition \ref{tails} we can bound the $q$-th moments of the maxima of the regressors uniformly by  
\begin{align*}
\E\left[\max\limits_{r=1,\dots,d}\|X_r\|_\infty^q\right]^{\frac{1}{q}}&=\|\max\limits_{r=1,\dots,d}\max\limits_{j=1,\dots,p}|X_{r,j}|\|_{P,q}\\
&\le q!\| \max\limits_{r=1,\dots,d}\max\limits_{j=1,\dots,p}|X_{r,j}|\|_{\psi_1}\\
&\le q!\log^{\frac{1}{\rho}-1}(2) \| \max\limits_{r=1,\dots,d}\max\limits_{j=1,\dots,p}|X_{r,j}|\|_{\psi_\rho}\\
&\le q!\log^{\frac{1}{\rho}-1}(2)K\log^{\frac{1}{\rho}}(1+dp) \max\limits_{r=1,\dots,d}\max\limits_{j=1,\dots,p}\| X_{r,j}\|_{\psi_\rho}\\
&\le C\log^{\frac{1}{\rho}}(a_n)
\end{align*}
where $C$ does depend on $q$ and $\rho$ but not on $n$. For the norm inequalities we refer to van der Vaar and Wellner (1996) \cite{vanweak}.\\
As in the previous proof we use $C$ for a strictly positive constant, independent of $n$, which may have a different value in each appearance. The notation $a_n\lesssim b_n$ stands for $a_n\le Cb_n$ for all $n$ for some fixed $C$. Additionally $a_n=o(1)$ stands for uniform convergence towards zero meaning there exists sequence $(b_n)_{n\ge 1}$ with $|a_n|\le b_n$, $b_n$ is independent of $P\in\mathcal{P}_n$ for all $n$ and $b_n\to 0$. Finally, the notation $a_n\lesssim_{P}b_n$ means that for any $\epsilon>0$, there exists $C$ such that uniformly over all $n$ we have $P_P(a_n>Cb_n)\le \epsilon$.\\
We essentially modify the proof from theorem $4.2$ from Belloni et al. (2018) \cite{belloni2018uniformly} to fit our setting and keep the notation as similar as possible.\\
We set $\mathcal{U}=\{1,\dots,d\}$ and
$$\beta_r^{(1)}\in\arg\min\limits_{\beta\in\mathbb{R}^p} \E\Big[ \underbrace{\frac{1}{2} \left(Y_r-\beta X_r-\beta^{(2)}_r X_r\right)^2}_{:=M_r(Y_r,X_r,\beta,a_r)}\Big]$$
with $a_r=\beta^{(2)}_r X_r$ for all $r=1,\dots,d$. Since the coefficient $\beta^{(2)}$ is approximately sparse by assumption we estimate the nuisance parameter $a_r$ with $\hat{a}_r\equiv 0$. Define
$$ M_r(Y_r,X_r,\beta):=M_r(Y_r,X_r,\beta,\hat{a}_r)=\frac{1}{2} \left(Y_r-\beta X_r\right)^2.$$ Then we have
$$\hat{\beta}_r\in \arg\min\limits_{\beta\in\mathbb{R}^p}\left(\E_n\left[M_r(Y_r,X_r,\beta)\right]+\frac{\lambda}{n}\|\hat{\Psi}_r\beta\|_1\right)$$
and 
$$\tilde{\beta}_r\in \arg\min\limits_{\beta\in\mathbb{R}^p}\left(\E_n\left[M_r(Y_r,X_r,\beta)\right]\right):\quad \text{supp}(\beta)\subseteq \text{supp}(\hat{\beta}_r).$$
At first we verify the condition WL from Belloni et al. (2018) \cite{belloni2018uniformly}.\\
Since $N_n=d$ we have $N(\varepsilon,\mathcal{U},d_{\mathcal{U}})\le N_n$ for all $\varepsilon\in(0,1)$ with 
$$d_{\mathcal{U}}(i,j)=\begin{cases}0 \quad \text{for }i=j\\1\quad \text{for }i\neq j.\end{cases}$$
To prove WL(i) observe that
\begin{align*}
S_r=\partial_\beta M_r(Y_r,X_r,\beta,a_r)|_{\beta=\beta^{(1)}_r}=-\varepsilon_r X_r.
\end{align*}
Since $\Phi^{-1}(1-t)\lesssim\sqrt{\log(1/t)}$, uniformly over $t\in (0,1/2)$ we have that
\begin{align*}
\|S_{r,j}\|_{P,3}\Phi^{-1}(1-\gamma/2pd)&=\|\varepsilon_r X_{r,j}\|_{P,3}\Phi^{-1}(1-\gamma/2pd)\\
&\le \left(\|\varepsilon_r\|_{P,6} \|X_{r,j}\|_{P,6}\right)^{1/2}\Phi^{-1}(1-\gamma/2pd)\\
&\le C\log^{\frac{1}{2}}(a_n)\lesssim \varphi_n n^{\frac{1}{6}}=o(1)
\end{align*}
with $$\varphi_n=O\left(\frac{\log^{\frac{1}{2}}(a_n)}{n^{\frac{1}{6}}}\right)$$ uniformly over all $j=1,\dots,p$ and $r=1,\dots,d$ by assumption \ref{tails} and \ref{growthc}.
Further, it holds
\begin{align*}
c\le\E\left[S_{r,j}^2\right]&=\E\left[\varepsilon_r^2 X_{r,j}^2\right]\\
&\le \left(\E\left[\varepsilon_r^4\right]\E\left[ X_{r,j}^4\right]\right)^{1/2}\\
&\le C
\end{align*}
for all $j=1,\dots,p$ and $r=1,\dots,d$ by assumption \ref{tails} and \ref{eigen} which implies condition $WL(ii)$. Observe that condition $WL(iii)$ reduces to 
\begin{align*}
\max\limits_{r=1,\dots,d}\max\limits_{j=1,\dots,p}|(\E_n-\E)[S_{r,j}^2]|\le \varphi_n
\end{align*}
with probability $1-\Delta_n$. We use a maximal inequality, see for example lemma $O.2$ from Belloni et al. (2018) \cite{belloni2018uniformly}. Let $\mathcal{W}=(\mathcal{Y},\mathcal{X})$ with $Y=(Y_1,\dots,Y_d)\in\mathcal{Y}$ and $X=(X_1,\dots,X_d)\in\mathcal{X}$.
Define
\begin{align*}
\mathcal{F}:=\{f_{r,j}^2|r=1,\dots,d, j=1,\dots,p\}
\end{align*}
with
\begin{align*}
f_{r,j}:&\mathcal{W}=(\mathcal{Y},\mathcal{X})\to \mathbb{R}\\
& W=(Y,X) \mapsto \left(Y_r-\beta_rX_r\right)X_{r,j}=\varepsilon_rX_{r,j}=S_{r,j}.
\end{align*}
Observe that
\begin{align*}
\|\sup\limits_{f\in\mathcal{F}}|f|\|_{P,q}&=\|\max\limits_{r=1,\dots,d}\max\limits_{j=1,\dots,p}|f_{r,j}^2|\|_{P,q}\\
&=\E\left[\max\limits_{r=1,\dots,d}\max\limits_{j=1,\dots,p}\varepsilon_r^{2q} X_{r,j}^{2q}\right]^{1/q}\\
&\le\E\left[\max\limits_{r=1,\dots,d}\varepsilon_r^{2q} \max\limits_{r=1,\dots,d}\max\limits_{j=1,\dots,p}X_{r,j}^{2q}\right]^{1/q}\\
&\le \left(\E\left[\max\limits_{r=1,\dots,d}\varepsilon_r^{4q}\right]^{1/4q} \E\left[\max\limits_{r=1,\dots,d}\max\limits_{j=1,\dots,p}X_{r,j}^{4q}\right]^{1/4q}\right)^2\\
&\le C\log^{\frac{4}{\rho}}(a_n).
\end{align*}
Since we have 
\begin{align*}
\sup\limits_{f\in\mathcal{F}}\|f\|_{P,2}^2&=\max\limits_{r=1,\dots,d}\max\limits_{j=1,\dots,p} \E\left[S_{r,j}^4\right]\le \max\limits_{r=1,\dots,d}\max\limits_{j=1,\dots,p} \E\left[\varepsilon_r^{8}\right]^{1/2}\E\left[X_{r,j}^{8}\right]^{1/2}\le C
\end{align*}
we can choose a constant with
$$\sup\limits_{f\in\mathcal{F}}\|f\|_{P,2}^2\le C\le \|\sup\limits_{f\in\mathcal{F}}|f|\|^2_{P,2}.$$
Additionally $|\mathcal{F}|=dp$ which implies
$$\log\sup\limits_{Q}N(\epsilon\|F\|_{Q,2},\mathcal{F},\|\cdot\|_{Q,2})\le  \log(dp)\lesssim \log(a_n/\epsilon), \quad 0<\epsilon\le 1.$$
Using lemma $O.2$ from Belloni et al. (2018) \cite{belloni2018uniformly} we obtain with probability not less than $1-o(1)$
\begin{align*}
\max\limits_{r=1,\dots,d}\max\limits_{j=1,\dots,p}|(\E_n-\E)[S_{r,j}^2]|&=n^{-1/2}\sup\limits_{f\in\mathcal{F}}|\mathbb{G}_n(f)|\\
&\le n^{-1/2}C\left(\sqrt{\log\left(a_n\right)}+n^{-1/2+1/q}\log^{1+\frac{4}{\rho}}(a_n)\right)\\
&=C\left(\sqrt{\frac{\log\left(a_n\right)}{n}}+\frac{\log^{1+\frac{4}{\rho}}(a_n)}{n^{1-1/q}}\right)\\
&\le \varphi_n=o(1)
\end{align*}
by the growth condition \ref{growthc}.\\
We proceed by verifying assumption $L.1$. The function $\beta\mapsto M_r\left(Y_r,X_r,\beta\right)$ is convex, which is the first requirement of assumption $L.1$.\\
We now proceed with a simplified version of proof of $J.1$ from Belloni et al. (2018) \cite{belloni2018uniformly}.
Define 
\begin{align*}
\mathcal{G}:=\{g_r:X\to(\beta^{(2)}_rX_r)^2|r=1,\dots,d\}
\end{align*}
with envelope
\begin{align*}
G:=\max\limits_{r=1,\dots,d}\|X_r\|^2_\infty\|\beta_r^{(2)}\|_1^2.
\end{align*}
Note that
\begin{align*}
\|G\|_{P,q}&=\E\left[\max\limits_{r=1,\dots,d}\|X_r\|^{2q}_\infty\|\beta_r^{(2)}\|_1^{2q}\right]^{\frac{1}{q}}\\
&\le\max\limits_{r=1,\dots,d}\|\beta_r^{(2)}\|_1^{2}\E\left[\max\limits_{r=1,\dots,d}\|X_r\|^{2q}_\infty\right]^{\frac{1}{q}}\\
&\lesssim \max\limits_{r=1,\dots,d}\|\beta_r^{(2)}\|_1^{2} \log(a_n)^{\frac{2}{\rho}}
\end{align*}
and for all $0<\varepsilon\le 1$ we have
\begin{align*}
N(\varepsilon\|G\|_{P,2},\mathcal{G},\|\cdot\|_{P,2})\le d\le d/\varepsilon.
\end{align*}
Since
\begin{align*}
\sup\limits_{g\in\mathcal{G}}\|g\|^2_{P,2}=\max\limits_{r=1,\dots,d}\E[(\beta^{(2)}_rX_r)^4]\lesssim \max\limits_{r=1,\dots,d}\|\beta_r^{(2)}\|_1^{4}
\end{align*}
we can use lemma $O.2$ from Belloni et al. (2018) \cite{belloni2018uniformly} to obtain with probability not less than $1-o(1)$
\begin{align*}
&\quad\max\limits_{r=1,\dots,d}|(\E_n-\E)[(\beta^{(2)}_rX_r)^2]|\\
&=n^{-1/2}\sup\limits_{g\in\mathcal{G}}|\mathbb{G}_n(g)|\\
&\lesssim C\left(\sqrt{\frac{\log(a_n)\max\limits_{r=1,\dots,d}\|\beta_r^{(2)}\|_1^{4}}{n}}+n^{-1+1/q}\max\limits_{r=1,\dots,d}\|\beta_r^{(2)}\|_1^{2} \log^{1+\frac{2}{\rho}}(a_n)\right)\\
&\lesssim C\left(\sqrt{\frac{\log(a_n)}{n}}\sqrt{\frac{s^2\log(a_n)}{n}}+\frac{s\log(a_n)}{n}\sqrt{n^{2/q}\frac{\log^{1+\frac{4}{\rho}}(a_n)}{n}}\right)\\
&\lesssim \frac{s\log(a_n)}{n},
\end{align*}
for a suitable choice of $q$ where we used $\max_{r=1,\dots,d}\|\beta^{(2)}_r\|_1^2\lesssim \sqrt{\frac{s^2\log(a_n)}{n}}$ from condition \ref{asparse} and the growth condition \ref{growthc}.\\
Using the triangle inequality and $\max_{r=1,\dots,d}\E\left[ (\beta^{(2)}_r X_r)^2 \right]\lesssim \frac{s\log(a_n)}{n}$ from condition \ref{asparse} we obtain
\begin{align}\label{rateapprox}
\max\limits_{r=1,\dots,d}\E_n[(\beta^{(2)}_rX_r)^2]&\le\max\limits_{r=1,\dots,d}|(\E_n-\E)[(\beta^{(2)}_rX_r)^2]|+\max\limits_{r=1,\dots,d}\E[(\beta^{(2)}_rX_r)^2]\notag \\
&\lesssim_P \frac{s\log(a_n)}{n}.
\end{align}
To show assumption $L.1$ (a), note that for all $\delta\in\mathbb{R}^p$
\begin{align*}
&\quad\left|\E_n\left[\partial_\beta M_r(Y_r,X_r,\beta_r^{(1)})-\partial_\beta M_r(Y_r,X_r,\beta_r^{(1)},a_r)\right]^T\delta\right|\\
&=\left|\E_n\left[X_r(\beta_r^{(2)}X_r)\right]^T\delta\right|\le||(\beta_r^{(2)}X_r)||_{\mathbb{P}_n,2}||X_r^T\delta||_{\mathbb{P}_n,2}\\
&\lesssim_P\sqrt{\frac{s\log(a_n)}{n}}||X_r^T\delta||_{\mathbb{P}_n,2}
\end{align*}
for all $r=1,\dots,d$. Further we have
\begin{align*}
&\quad\E_n\left[\frac{1}{2}\left(Y_r-(\beta_r^{(1)}+\delta^T)X_r\right)^2\right]-\E_n\left[\frac{1}{2}\left(Y_r-\beta_r^{(1)}X_r\right)^2\right]\\
&=-\E_n\left[\left(Y_r-\beta_r^{(1)}X_r\right)\delta^T X_r\right]+\frac{1}{2}\E_n\left[(\delta^T X_r)^2\right],
\end{align*}
where
$$-\E_n\left[\left(Y_r-\beta_r^{(1)}X_r\right)\delta^T X_r\right]=\E_n\left[\partial_\beta M_r(Y_r,X_r,\beta_r^{(1)})\right]^T\delta$$
and
$$\frac{1}{2}\E_n\left[(\delta^T X_r)^2\right]=||\sqrt{w_r}\delta^T X_r||_{\bP_n,2}^2$$
with $\sqrt{w_r}=1/4$. This gives us assumption $L.1$ (c) with $\Delta_n=0$ and $\bar{q}_{A_r}=\infty$.
Since condition $WL(ii)$ and $WL(iii)$ hold we have with probability $1-o(1)$ 
\begin{align*}
1\lesssim l_{r,j}=\left(\E_n[S_{r,j}^2]\right)^{1/2}\lesssim 1
\end{align*}
uniformly over all $r=1,\dots,d$ and $j=1,\dots,p$, which directly implies
\begin{align*}
1\lesssim\|\hat{\Psi}^{(0)}_r\|_{\infty}:=\max\limits_{j=1,\dots,p}|l_{r,j}|\lesssim 1
\end{align*}
and additionally 
\begin{align*}
1\lesssim\|(\hat{\Psi}^{(0)}_r)^{-1}\|_{\infty}:=\max\limits_{j=1,\dots,p}|l_{r,j}^{-1}|\lesssim 1.
\end{align*}
For now, we suppose that $m=0$ in algorithm \ref{penalg}. Uniformly over $r=1,\dots,d$, $j=1,\dots,p$ we have 
\begin{align*}
\hat{l}_{r,j,0}&=\left(\E_n[\max\limits_{1\le i \le n} \|X_r^{(i)}\|^2_{\infty}]\right)^{1/2}\ge \left(\E_n[ \|X_r\|^2_{\infty}]\right)^{1/2}\gtrsim_{P} 1
\end{align*}
where the last inequality holds due to condition \ref{eigen} and an application of the maximal inequality lemma.\\
Also uniformly over $r=1,\dots,d$, $j=1,\dots,p$ we have for an arbitrary $q>0$
\begin{align*}
\hat{l}_{r,j,0}&=\max\limits_{1\le i \le n} \|X_r^{(i)}\|_{\infty}\\
&\le n^{1/q}\left(\frac{1}{n}\sum\limits_{i=1}^n \|X_r^{(i)}\|_{\infty}^{q}\right)^{1/q}\\
&= n^{1/q}\left(\E_n[\|X_r\|_{\infty}^{q}]\right)^{1/q}
\end{align*}
with 
$$\E[\|X_r\|_{\infty}^{q}]^{1/q}\lesssim \log^{\frac{1}{\rho}}(a_n)$$
By maximal inequality, we have with probability $1-o(1)$ for a sufficiently large $q'>0$
\begin{align*}
&\quad\max\limits_r\left|\E_n[\|X_r\|_{\infty}^{q}]-\E[\|X_r\|_{\infty}^{q}]\right|\\
&\lesssim C\left(\sqrt{\frac{\log^{\frac{2q}{\rho}+1}(a_n)}{n}}+n^{1/q'-1}\log^{\frac{q}{\rho}+1}(a_n)\right)\\
&\lesssim \log^{\frac{q}{\rho}}(a_n)
\end{align*}
since
$$\E[\max\limits_r\|X_r\|_{\infty}^{qq'}]^{1/q'}\lesssim \log^{\frac{q}{\rho}}(a_n) \text{ and } \max\limits_r\E[\|X_r\|_{\infty}^{q2}]^{1/2}\lesssim \log^{\frac{q}{\rho}}(a_n).$$
We conclude
\begin{align*}
&\hat{l}_{r,j,0}\le n^{1/q}\left(\E_n[\|X_r\|_{\infty}^{q}]\right)^{1/q}\\
&\le n^{1/q}\left(\left|\E_n[\|X_r\|_{\infty}^{q}]-\E[\|X_r\|_{\infty}^{q}]\right|+ \E[\|X_r\|_{\infty}^{q}]\right)^{1/q}\\
&\lesssim_P n^{1/q}\log^{\frac{1}{\rho}}(a_n).
\end{align*}
uniformly over $r$. Therefore assumption $L.1(b)$ holds for some $\Delta_n=o(1)$, $L\lesssim n^{1/q}\log^{\frac{1}{\rho}}(a_n)$ and $l\gtrsim 1$. 
Hence, we can find a $c_l$ with $l>1/c_l$. Setting $c_\lambda>c_l$ and $\gamma=\gamma_n\in[1/n,1/\log(n)]$ in the choice of $\lambda$, we have 
\begin{align*}
P\left(\frac{\lambda}{n}\ge c_l\max\limits_{r=1,\dots,d}\|(\hat{\Psi}_r^{(0)})^{-1}\E_n[S_r]\|_{\infty}\right)\ge 1-\gamma-o(\gamma)-\Delta_n=1-o(1)
\end{align*}
due to lemma $L.4$ from Belloni et al. (2018) \cite{belloni2018uniformly}.\\ 
Now we uniformly bound the sparse eigenvalues. Set 
$$l_n=\log^{\frac{2}{\rho}}(a_n)n^{2/\bar{q}}$$
for a $\bar{q} > 5\tilde{q}$ with $\tilde{q}$ in \ref{growthc}. We apply Lemma $P.1$ in \cite{belloni2018uniformly} with $K\lesssim n^{1/\bar{q}}\log^{\frac{1}{\rho}}(a_n)$ and
\begin{align*}
\delta_n&\lesssim K\sqrt{sl_n}n^{-1/2}\log(sl_n)\log^{\frac{1}{2}}(a_n)\log^{\frac{1}{2}}(n)\\
&\lesssim \sqrt{n^{\frac{4}{\bar{q}}}\log(n)\log^2(sl_n)\frac{s\log^{1+\frac{4}{\rho}}(a_n)}{n}}\\
&\lesssim \sqrt{n^{\frac{5}{\bar{q}}}\frac{s\log^{1+\frac{4}{\rho}}(a_n)}{n}}
\end{align*}
for $n$ large enough. Hence by growth condition \ref{growthc}, it holds 
$$\delta_n=o(1)$$
which implies
\begin{align*}
1\lesssim \min\limits_{\|\delta\|_0\le l_n s}\frac{\|\delta X_r\|^2_{\bP_n,2}}{\|\delta\|_2^2}\le \max\limits_{\|\delta\|_0\le l_n s}\frac{\|\delta X_r\|^2_{\bP_n,2}}{\|\delta\|_2^2}\lesssim 1
\end{align*}
with probability $1-o(1)$ uniformly over $r=1,\dots,d$.\\
Define $T_r:=\text{supp}(\beta^{(1)}_r)$ and
$$\tilde{c}:=\frac{Lc_l+1}{lc_l-1}\max\limits_{r=1,\dots,d}\|\hat{\Psi}^{(0)}_r\|_{\infty}\|(\hat{\Psi}^{(0)}_r)^{-1}\|_{\infty}\lesssim L.$$
Let the restricted eigenvalues be definied as 
\begin{align*}
\bar{\kappa}_{2\tilde{c}}:=\min\limits_{r=1,\dots,d}\inf\limits_{\delta\in\Delta_{2\tilde{c},r}}\frac{\|\delta X_r\|_{\bP_n,2}}{\|\delta_{T_r}\|_2} 
\end{align*}
where $\Delta_{2\tilde{c},r}:=\{\delta:\|\delta^c_{T_r}\|_1\le 2\tilde{c}\|\delta_{T_r}\|_1\}$.
By the argument given in Bickel et al. (2009) \cite{BickelRitovTsybakov2009} we have
\begin{align*}
\bar{\kappa}_{2\tilde{c}}&\ge \left(\min\limits_{\|\delta\|_0\le l_n s}\frac{\|\delta X_r\|^2_{\bP_n,2}}{\|\delta\|_2^2}\right)^{1/2}-2\tilde{c}\left(\max\limits_{\|\delta\|_0\le l_n s}\frac{\|\delta X_r\|^2_{\bP_n,2}}{\|\delta\|_2^2}\right)^{1/2}\left(\frac{s}{sl_n}\right)^{1/2}\\
&\gtrsim \left(\min\limits_{\|\delta\|_0\le l_n s}\frac{\|\delta X_r\|^2_{\bP_n,2}}{\|\delta\|_2^2}\right)^{1/2}-2n^{\frac{1}{q}-\frac{1}{\bar{q}}}\left(\max\limits_{\|\delta\|_0\le l_n s}\frac{\|\delta X_r\|^2_{\bP_n,2}}{\|\delta\|_2^2}\right)^{1/2}\\
&\gtrsim 1
\end{align*}
with probability $1-o(1)$ for a suitable choice of $q$ with $q>\bar{q}$. Since 
\begin{align*}
\frac{\lambda}{n}\lesssim n^{-1/2}\Phi^{-1}\left(1-\gamma/(2dp)\right)\lesssim n^{-1/2}\sqrt{\log(2dp/\gamma)}\lesssim n^{-1/2}\log^{\frac{1}{2}}(a_n)
\end{align*} and the penalty loading are uniformly bounded from above and away from zero we have
\begin{align*}
\max\limits_{r=1,\dots,d}\|(\hat{\beta}_r-\beta^{(1)}_r)X_r\|_{\bP_n,2}\lesssim_P L\sqrt{\frac{s\log(a_n)}{n}}
\end{align*}
by lemma $L.1$ from Belloni et al. (2018) \cite{belloni2018uniformly}. \\ \\
To establish assumption $L.1(b)$ for $m\ge 1$, we proceed by induction. Assume that the assumption holds for $\hat{\Psi}_{r,m-1}$ with some $\Delta_n=o(1)$, $l\gtrsim 1$ and $L\lesssim n^{1/q}\log^{\frac{1}{\rho}}(a_n)$. We have shown that the estimator based on $\hat{\Psi}_{r,m-1}$ obeys
\begin{align*}
\max\limits_{r=1,\dots,d}\|(\hat{\beta}_r-\beta^{(1)}_r)X_r\|_{\bP_n,2}\lesssim L\sqrt{\frac{s\log(a_n)}{n}}
\end{align*}
with probability $1-o(1)$. Observe that
\begin{align*}
\max\limits_{r=1,\dots,d}\|\beta^{(2)}_r X_r\|_{\bP_n,2}&\lesssim_P \sqrt{\frac{s\log(a_n)}{n}}
\end{align*}
as shown in (\ref{rateapprox}).
Using the triangle inequality we obtain with probability $1-o(1)$
\begin{align*}
\max\limits_{r=1,\dots,d}\|(\hat{\beta}_r-\beta_r)X_r\|_{\bP_n,2}&\le  \max\limits_{r=1,\dots,d}\|(\hat{\beta}_r-\beta^{(1)}_r)X_r\|_{\bP_n,2} + \max\limits_{r=1,\dots,d}\|\beta^{(2)}_r X_r\|_{\bP_n,2}\\
&\lesssim L\sqrt{\frac{s\log(a_n)}{n}}. 
\end{align*}
This implies 
\begin{align*}
|\hat{l}_{r,j,m}-l_{r,j}|&=\left|\E_n\left[\left(\left(Y_r-\hat{\beta}_r X_r\right) X_{r,j}\right)^2\right]^{1/2}-\E_n\left[\left(\left(Y_r-\beta_r X_r\right) X_{r,j}\right)^2\right]^{1/2}\right|\\
&\le \left|\E_n\left[\left(\left((\hat{\beta}_r-\beta_r) X_r\right) X_{r,j}\right)^2\right]^{1/2}\right|\\
&\lesssim \|(\hat{\beta}_r-\beta_r)X_r\|_{\bP_n,2} \max\limits_{1\le i \le n} \max\limits_{r=1,\dots,d}\|X_r^{(i)}\|_{\infty}\\
&\lesssim_P L\sqrt{\frac{s\log(a_n)}{n}} n^{1/q}\log^{\frac{1}{\rho}}(a_n)\\
&\lesssim \sqrt{n^{4/q}\frac{s\log^{1+\frac{4}{\rho}}(a_n)}{n}}=o(1)
\end{align*}
uniformly over $r=1,\dots,d$ and $j=1,\dots,p$.
Therefore assumption $L.1(b)$ holds for $\hat{\Psi}_{r,m}$ for some $\Delta_n=o(1)$, $l\gtrsim 1$ and $L\lesssim 1$.\\
Consequently, we have
\begin{align*}
\max\limits_{r=1,\dots,d}\|(\hat{\beta}_r-\beta^{(1)}_r)X_r\|_{\bP_n,2}\lesssim \sqrt{\frac{s\log(a_n)}{n}}.
\end{align*}
and
\begin{align*}
\max\limits_{r=1,\dots,d}\|\hat{\beta}_r-\beta^{(1)}_r\|_{1}\lesssim \sqrt{\frac{s^2\log(a_n)}{n}}
\end{align*}
with probability $1-o(1)$ due to lemma $L.1$ from Belloni et al. (2018) \cite{belloni2018uniformly}.\\ 
Observe that with probability $1-o(1)$ uniformly over all $r=1,\dots,d$ we have
\begin{align*}
&\left|\left(\E_n\left[\partial_{\beta}M_r(Y_r,X_r,\hat{\beta}_r)-\partial_{\beta}M_r(Y_r,X_r,\beta^{(1)}_r)\right]\right)^T\delta\right|\\
=&\left|\left(\E_n\left[(\hat{\beta}_r-\beta_r^{(1)})X_r X_r^T\right]\right)^T\delta\right|\\
\le&\|(\hat{\beta}_r-\beta^{(1)}_r)X_r\|_{\bP_n,2}\|\delta X_r\|_{\bP_n,2}\le L_n\|\delta X_r\|_{\bP_n,2}
\end{align*}
where $L_n\lesssim (s\log(a_n)/n)^{1/2}$. Since the maximal sparse eigenvalues
$$ \phi_{max}(l_ns,r):=\max\limits_{\|\delta\|_0\le l_n s}\frac{\|\delta X_r\|^2_{\bP_n,2}}{\|\delta\|_2^2}$$
are uniformly bounded from above, lemma $L.2$ from Belloni et al. (2018) \cite{belloni2018uniformly} directly implies
\begin{align*}
\max\limits_{r=1,\dots,d}\|\hat{\beta}_r\|_0 \lesssim s
\end{align*}
with probability $1-o(1)$. Combining this result with the uniform restrictions on the sparse eigenvalues from above we directly obtain
\begin{align*}
\max\limits_{r=1,\dots,d}\|\hat{\beta}_r-\beta_r^{(1)}\|_2 \lesssim \max\limits_{r=1,\dots,d}\|(\hat{\beta}_r-\beta^{(1)}_r)X_r\|_{\bP_n,2}\lesssim \sqrt{\frac{s\log(a_n)}{n}} 
\end{align*}
with probability $1-o(1)$.\\
We now proceed by using lemma $L.3$ from Belloni et al. (2018) \cite{belloni2018uniformly}. We obtain uniformly over all $r=1,\dots,d$
\begin{align*}
\E_n[M_r(Y_r,X_r,\tilde{\beta}_r)]-\E_n[M_r(Y_r,X_r,\beta_r)]&\le \frac{\lambda L}{n}\|\hat{\beta}_r-\beta_r\|_1\max\limits_{r=1,\dots,d}\|\hat{\Psi}^{(0)}_r\|_{\infty}\\
&\lesssim \frac{\lambda}{n}\|\hat{\beta}_r-\beta_r\|_1\\
&\lesssim \frac{s\log(a_n)}{n}
\end{align*}
with probability $1-o(1)$, where we used $L\lesssim 1$ and $\max\limits_{r=1,\dots,d}\|\hat{\Psi}^{(0)}_r\|_{\infty}\lesssim 1$. Since
\begin{align*}
\max\limits_{r=1,\dots,d}\|\E_n[S_r]\|_{\infty}&\le \max\limits_{r=1,\dots,d}\|\hat{\Psi}^{(0)}_r\|_{\infty}\|\big(\hat{\Psi}^{(0)}_r\big)^{-1}\E_n[S_r]\|_{\infty}\lesssim \frac{\lambda}{n}\lesssim n^{-1/2}\log^{\frac{1}{2}}(a_n) 
\end{align*}
with probability $1-o(1)$, we obtain
\begin{align*}
\max\limits_{r=1,\dots,d}\|(\tilde{\beta}_r-\beta^{(1)}_r)X_r\|_{\bP_n,2}\lesssim \sqrt{\frac{s\log(a_n)}{n}}
\end{align*} 
with probability $1-o(1)$, where we used 
$$\max\limits_{r=1,\dots,d}\|\hat{\beta}_r\|_0 \lesssim s ,\ C_n\lesssim (s\log(a_n)/n)^{1/2}$$
and that the minimum sparse eigenvalues are uniformly bounded away from zero. With the same argument as above we directly obtain 
\begin{align*}
\max\limits_{r=1,\dots,d}\|\tilde{\beta}_r-\beta_r^{(1)}\|_2 \lesssim \max\limits_{r=1,\dots,d}\|(\tilde{\beta}_r-\beta^{(1)}_r)X_r\|_{\bP_n,2}\lesssim \sqrt{\frac{s\log(a_n)}{n}}
\end{align*}
This finally completes the proof.
\end{proof}

\begin{proof}[Proof of Lemma \ref{condwl}]\ \\
See the proof of lemma L.4 from Belloni et al. (2018) \cite{belloni2018uniformly}. Since the regressors are standardized for all $j=1,\dots,p$ and independent from the error terms for all $r=1,\dots,d$, observe that
$$\frac{\E[X_{r,j}^2\varepsilon_r^2]}{\E[\varepsilon_r^2]}=\frac{\E[X_{r,j}^2]\E[\varepsilon_r^2]}{\E[\varepsilon_r^2]}=\E[X_{r,j}^2]=1.$$
We have due to \ref{WL}(iii)
\begin{align*}
&P\left(\max_{r=1,\dots,d}\max_{j=1,\dots,p}\frac{\E_n[X_{r,j}^2\varepsilon_r^2]}{\E_n[\varepsilon_r^2]}>1+\varphi_n\right)\\
\le &P\left(\max_{r=1,\dots,d}\max_{j=1,\dots,p}\frac{\E[X_{r,j}^2\varepsilon_r^2]+\tilde{\varphi}_n}{\E[\varepsilon_r^2]-\bar{\varphi}_n}>1+\varphi_n\right)+\Delta_n\\
\le &P\left(\max_{r=1,\dots,d}\left|\frac{\E[\varepsilon_r^2]+\tilde{\varphi}_n}{\E[\varepsilon_r^2]-\bar{\varphi}_n}-1\right|>\varphi_n\right)+\Delta_n\\
= &P\left(\max_{r=1,\dots,d}\left|\frac{\E[\varepsilon_r^2]+\tilde{\varphi}_n}{\E[\varepsilon_r^2]-\bar{\varphi}_n}-\frac{\E[\varepsilon_r^2]}{\E[\varepsilon_r^2]}\right|>\varphi_n\right)+\Delta_n\\
=&P\left(\max_{r=1,\dots,d}\left|\frac{\left(\E[\varepsilon_r^2]+\tilde{\varphi}_n\right)\E[\varepsilon_r^2]-\E[\varepsilon_r^2]\left(\E[\varepsilon_r^2]-\bar{\varphi}_n\right)}{\left(\E[\varepsilon_r^2]-\bar{\varphi}_n\right)\E[\varepsilon_r^2]}\right|>\varphi_n\right)+\Delta_n\\
=&\underbrace{P\left(\left|\frac{\left(\left(1+\tilde{\varphi}'_n\right)-\left(1-\bar{\varphi}'_n\right)\right)}{\left(1-\bar{\varphi}'_n\right)}\right|>\varphi_n\right)}_{=0}+\Delta_n,
\end{align*}
for an suitable choice of $\varphi_n =o(1)$, where $\bar{\varphi}'_n\ge \underline{C} \bar{\varphi}_n$ and $\tilde{\varphi}'_n\le \overline{C}\tilde{\varphi}_n$ due to \ref{WL}(ii) .\\
Next, for each $j=1,\dots,p$ and $r=1,\dots,d$, we apply lemma O.1 from Belloni et al. (2018) \cite{belloni2018uniformly} with $\mu=1$ and $\ell_n=c''\varphi_n^{-1}$, where $c''$ is a small constant that can be chosen to depend only on $\underline{C}$ and $\overline{C}$. Then conditions \ref{WL}(i) and \ref{WL}(ii) imply
\begin{align*}
0\le \Phi^{-1}\left(1-\frac{\gamma}{2pd}\right)\le \frac{n^{1/6}M_n(j,r)}{\ell_n}-1
\end{align*}
for $M_n(j,r)= \E[X_{r,j}^2\varepsilon_r^2]^{1/2}/\E[|X_{r,j}\varepsilon_r|^3]^{1/3}$ for each $r=1,\dots,d$ and $j=1,\dots,p$.\\
Therefore, we have
\begin{align*}
&P\left(c\max_{r=1,\dots,d}\|S_r\|_{\infty}>c'n^{-1/2}\Phi^{-1}\left(1-\frac{\gamma}{2pd}\right)\right)\\
=&P\left(c\max_{r=1,\dots,d}\max_{j=1,\dots,p}\frac{|\E_n[X_{r,j}\varepsilon_r]|}{\sqrt{\E_n[\varepsilon_r^2]}}>c'n^{-1/2}\Phi^{-1}\left(1-\frac{\gamma}{2pd}\right)\right)\\
\le&\sum\limits_{r=1}^{d}\sum\limits_{j=1}^p P\left(c\frac{|n^{1/2}\E_n[X_{r,j}\varepsilon_r]|}{\sqrt{\E_n[\varepsilon_r^2]}}>c'\Phi^{-1}\left(1-\frac{\gamma}{2pd}\right)\right)\\
=&\sum\limits_{r=1}^{d}\sum\limits_{j=1}^p P\left(c\frac{|n^{1/2}\E_n[X_{r,j}\varepsilon_r]|}{\sqrt{\E_n[X_{r,j}^2\varepsilon_r^2]}}\sqrt{\frac{\E_n[X_{r,j}^2\varepsilon_r^2]}{\E_n[\varepsilon_r^2]}}>c'\Phi^{-1}\left(1-\frac{\gamma}{2pd}\right)\right)\\
\le&\sum\limits_{r=1}^{d}\sum\limits_{j=1}^p P\left(\frac{|n^{1/2}\E_n[X_{r,j}\varepsilon_r]|}{\sqrt{\E_n[X_{r,j}^2\varepsilon_r^2]}}c\sqrt{1+\varphi_n}>c'\Phi^{-1}\left(1-\frac{\gamma}{2pd}\right)\right)+\Delta_n\\
\le& 2pd \frac{\gamma}{2pd}\left(1+O(\varphi_n^{1/3})\right)+\Delta_n\\
\le& \gamma +o(\gamma)+\Delta_n
\end{align*}
 for a sufficiently large $n$ (implying $c\sqrt{1+\varphi_n}\le c'$).
\end{proof}
\begin{proof}[Proof of Theorem \ref{uniform1}]\ \\
The proof is derived from the proof of lemma L.1. from Belloni et al. (2018) \cite{belloni2018uniformly}. At first we show that condition \ref{WL} is fulfilled. Conditions \ref{WL} (i), \ref{WL} (ii) and the first part of condition \ref{WL} (iii) have been verified in the proof of Theorem \ref{uniformlasso}. Hence, we need to show
\begin{align*}
\max_{r=1,\dots,d} |\E_n[\varepsilon_r^2]-\E[\varepsilon_r^2]|\le \bar{\varphi}_n
\end{align*}
with probability converging to one.\\
Let $\mathcal{W}=(\mathcal{Y},\mathcal{X})$ with $Y=(Y_1,\dots,Y_d)\in\mathcal{Y}$ and $X=(X_1,\dots,X_d)\in\mathcal{X}$.
Define $\mathcal{F}:=\{f_r|r=1,\dots,d\}$ with 
\begin{align*}
f_r:&\mathcal{W}=(\mathcal{Y},\mathcal{X})\to \mathbb{R}\\
&W=(Y,X)\mapsto (Y_r-\beta_r X_r)^2=\varepsilon^2_r.
\end{align*}
For a constant C that does depend on $q$ but not on $n$, observe that  
$$F:=\|\sup_{f\in \mathcal{F}}|f|\|_{P,q}=\|\max_{r=1,\dots,d}\varepsilon_r^2\|_{P,q}= \left(\E\left[\max_{r=1,\dots,d} \varepsilon_r^{2q}\right]^{1/2q}\right)^2\le C\log(d)^{\frac{2}{\rho}}$$
where we used the same argument as in the beginning of the proof of Theorem \ref{uniformlasso}.\\
Due to Assumption \ref{tails} the second moments of the error terms are uniformly bounded and hence we can choose a constant $C$ such that
$$\max_{r=1,\dots,d}\|\varepsilon_r\|_{P,2}^2\le C\le \|\max_{r=1,\dots,d}\varepsilon_r^2\|_{P,q}$$
and since $|\mathcal{F}|=d$ we have
$$\log\sup_{Q} N(\varepsilon\|F\|_{Q,2},\mathcal{F},\|\cdot\|_{Q,2})\le \log(d).$$
Therefore we are able to use lemma O.2 from Belloni et al. (2018) \cite{belloni2018uniformly}, which implies that with probability $1-o(1)$
\begin{align*}
\max_{r=1,\dots,d} |\E_n[\varepsilon_r^2]-\E[\varepsilon_r^2]|&=n^{-1/2}\sup_{f\in\mathcal{F}}|\mathbb{G}_n(f)|\\
&\lesssim \left(\sqrt{\frac{\log(d)}{n}}+ \frac{\log^{1+\frac{2}{\rho}}(d)}{n^{1-1/q}}\right)\le \bar{\varphi}_n.
\end{align*}
Due to the definition of $\hat{\beta}_r$ we have
$$\hat{Q}_r^{1/2}(\hat{\beta}_r)+\frac{\lambda}{n}\|\hat{\beta}_r\|_1\le \hat{Q}_r^{1/2}(\beta_r^{(1)})+\frac{\lambda}{n}\|\beta_r^{(1)}\|_1 $$
implying
\begin{align}\label{eq1} 
\hat{Q}_r^{1/2}(\hat{\beta}_r)- \hat{Q}_r^{1/2}(\beta_r^{(1)})\le \frac{\lambda}{n}\left(\|\delta_{r,T_r}\|_1-\|\delta_{r,T_r^c}\|_1\right)
\end{align}
with  $\delta_r:=\hat{\beta}_r-\beta_r^{(1)}$. Due to the convexity of $\beta\mapsto\hat{Q}_r^{1/2}(\beta)$ we have with probability $1-o(1)$:
\begin{align*}
\hat{Q}_r^{1/2}(\hat{\beta}_r)- \hat{Q}_r^{1/2}(\beta_r^{(1)})\ge \delta_r \hat{S}_r.
\end{align*}
For a sequence $C_n\lesssim\sqrt{\frac{s\log(a_n)}{n}}$ independent from $r$, it holds
\begin{align*}
|\delta_r \hat{S}_r|&\le |\delta_r S_r|+|\delta_r (\hat{S}_r-S_r)|\\
&\lesssim_P \|\delta_r\|_1\frac{\lambda}{nc} +|\delta_r (\hat{S}_r-S_r)|\\
&\lesssim_P\|\delta_r\|_1\frac{\lambda}{nc} +C_n\|\delta_r X_r\|_{\mathbb{P}_n,2}.
\end{align*}
To obtain the last inequality observe that
\begin{align*}
\E_n[(\varepsilon_r+\beta^{(2)}_rX_r)^2]&=\E_n[\varepsilon_r^2]+2\E_n[\varepsilon_r\beta^{(2)}_rX_r]+\underbrace{\E_n[(\beta^{(2)}_rX_r)^2]}_{\ge 0}\\
&\gtrsim \min_{r=1,\dots,d}\E[\varepsilon_r^2] + o_P(1)\\
&\gtrsim c + o_P(1)
\end{align*}
is uniformly bounded away from zero, since with probability $1-o(1)$
\begin{align*}
\min_{r=1,\dots,d}\E_n[\varepsilon_r\beta^{(2)}_rX_r]&\ge -\max_{r=1,\dots,d}|\E_n[\varepsilon_r\beta^{(2)}_rX_r]|\\
&\ge -\max_{r=1,\dots,d}\sqrt{\E_n[\varepsilon_r^2]\E_n[(\beta^{(2)}_rX_r)^2]}\\
&\gtrsim -\sqrt{\left(\max_{r=1,\dots,d}\E[\varepsilon_r^2]+\bar{\varphi}_n\right)\left(\max_{r=1,\dots,d}\E[(\beta^{(2)}_rX_r)^2]+\frac{s\log(a_n)}{n}\right)}\\
&\gtrsim -\sqrt{\frac{s\log(a_n)}{n}}
\end{align*}
uniformly converges towards zero where we used that
$$\max_{r=1,\dots,d} |\E_n[(\beta^{(2)}_rX_r)^2]-\E[(\beta^{(2)}_rX_r)^2]|\lesssim_P \frac{s\log(a_n)}{n}$$
as shown in proof of Theorem \ref{uniformlasso}.\\
This implies that
{\allowdisplaybreaks
\begin{align*}
|\delta_r (\hat{S}_r-S_r)| &= \left|\delta_r\left(\frac{\E_n[X_r(\varepsilon_r+\beta^{(2)}_rX_r)]}{\sqrt{\E_n[(\varepsilon_r+\beta^{(2)}_rX_r)^2]}}-\frac{\E_n[X_r\varepsilon_r]}{\sqrt{\E_n[\varepsilon_r^2]}}\right)\right|\\
&=\left|\delta_r\frac{\E_n[X_r(\varepsilon_r+\beta^{(2)}_rX_r)]\sqrt{\E_n[\varepsilon_r^2]}-\E_n[X_r\varepsilon_r]\sqrt{\E_n[(\varepsilon_r+\beta^{(2)}_rX_r)^2]}}{\sqrt{\E_n[(\varepsilon_r+\beta^{(2)}_rX_r)^2]\E_n[\varepsilon_r^2]}}\right|\\
&\lesssim_P \bigg|\delta_r\bigg(\E_n[X_r(\beta^{(2)}_rX_r)]\sqrt{\E_n[\varepsilon_r^2]}\\
&\quad+\E_n[X_r\varepsilon_r]\left(\sqrt{\E_n[\varepsilon_r^2]}-\sqrt{\E_n[(\varepsilon_r+\beta^{(2)}_rX_r)^2]}\right)\bigg)\bigg|\\
&\le \left|\E_n[(\delta_rX_r)(\beta^{(2)}_rX_r)]\sqrt{\E_n[\varepsilon_r^2]}\right|\\
&\quad+\left|\E_n[(\delta_rX_r)\varepsilon_r]\right|\underbrace{\Big|\Big(\sqrt{\E_n[\varepsilon_r^2]}-\sqrt{\E_n[(\varepsilon_r+\beta^{(2)}_rX_r)^2]}\Big)\Big|}_{\le\sqrt{\E_n[(\beta^{(2)}_rX_r)^2]}}\\
&\lesssim \sqrt{\E_n[(\delta_rX_r)^2]\E_n[(\beta^{(2)}_rX_r)^2]\E_n[\varepsilon_r^2]}\\
&\lesssim_P C_n \|\delta_r X_r\|_{\mathbb{P}_n,2}
\end{align*}
}\noindent
with an analogous argument as above. Hence, we have with probability $1-o(1)$
\begin{align}\label{eq2}
\hat{Q}_r^{1/2}(\hat{\beta}_r)- \hat{Q}_r^{1/2}(\beta_r^{(1)})\ge \delta_r \hat{S}_r\gtrsim -\|\delta_r\|_1\frac{\lambda}{nc}-C_n\|\delta_r X_r\|_{\mathbb{P}_n,2}.
\end{align}
Combining the inequalities (\ref{eq1}) and (\ref{eq2}) we obtain
\begin{align}
&-\|\delta_r\|_1\frac{\lambda}{nc}-C_n\|\delta_r X_r\|_{\mathbb{P}_n,2}\lesssim_P \frac{\lambda}{n}\left(\|\delta_{r,T_r}\|_1-\|\delta_{r,T_r^c}\|_1\right)\notag\\
\iff &\|\delta_{r,T_r^c}\|_1\lesssim_P \underbrace{\frac{c+1}{c-1}}_{:= \tilde{c}}\|\delta_{r,T_r}\|_1+\frac{n}{\lambda}\frac{c}{c-1}C_n\|\delta_r X_r\|_{\mathbb{P}_n,2}.\label{1norm_support}
\end{align}
Further we have
\begin{align*}
\hat{Q}_r(\hat{\beta}_r)- \hat{Q}_r(\beta_r^{(1)})=\|\delta_r X_r\|_{\mathbb{P}_n,2}^2-2\E_n[(Y_r-\beta_r^{(1)}X_r)\delta_r X_r]
\end{align*}
with 
\begin{align*}
\E_n[(Y_r-\beta_r^{(1)}X_r)\delta_r X_r]&=\E_n[\varepsilon_r\delta_r X_r]+\E_n[(\beta_r^{(2)}X_r)\delta_r X_r]\\
&\lesssim_P Q_r^{1/2}(\beta_r^{(1)})||S_r||_\infty||\delta_r||_1+C_n\|\delta_r X_r\|_{\mathbb{P}_n,2}
\end{align*}
by H\"older inequality. Due to Lemma $P.1$ in \cite{belloni2018uniformly} with $K\lesssim n^{1/\bar{q}}\log^{\frac{1}{\rho}}(a_n)$, $k\lesssim s$ for a suitable $\bar{q}>\tilde{q}$ and
\begin{align*}
\delta_n&\lesssim K\sqrt{s}n^{-1/2}\log(s)\log^{1/2}(a_n)\log^{1/2}(n)\\
&\lesssim \sqrt{n^{\frac{1}{\tilde{q}}}\frac{s\log^{1+\frac{2}{\rho}(a_n)}}{n}}=o(1)
\end{align*}
by growth condition \ref{growthc}, it holds
$$ c\le\phi_{min}(k,r)\le\phi_{max}(k,r)\le C$$
with probability $1-o(1)$ uniformly over $r=1,\dots,d$. Hence, the restricted eigenvalue
$$ \kappa_{2\tilde{c}}=\min_{r=1,\dots,d}\inf_{\delta\in\Delta_{2\tilde{c},r}}\frac{\|\delta X_r\|_{\mathbb{P}_n,2}}{\|\delta\|_2}$$
is bounded away from zero with probability $1-o(1)$ where $$\Delta_{2\tilde{c},r}=\{\delta:||\delta_{T_r^c}||_1\le2\tilde{c}||\delta_{T_r}||_1\}.$$
If $\delta_r\in\Delta_{2\tilde{c},r}$, then
\begin{align*}
\|\delta_r X_r\|_{\mathbb{P}_n,2}^2&=2\E_n[(Y_r-\beta_r^{(1)}X_r)\delta_rX_r]+[\hat{Q}_r^{1/2}(\hat{\beta}_r)+\hat{Q}_r^{1/2}(\beta_r^{(1)})][\hat{Q}_r^{1/2}(\hat{\beta}_r)- \hat{Q}_r^{1/2}(\beta_r^{(1)})]\\
&\lesssim_P 2Q_r^{1/2}(\beta_r^{(1)})||S_r||_\infty||\delta_r||_1+2C_n\|\delta_r X_r\|_{\mathbb{P}_n,2}\\
&\quad+[\hat{Q}_r^{1/2}(\hat{\beta}_r)+\hat{Q}_r^{1/2}(\beta_r^{(1)})]\frac{\lambda}{n}\left(\frac{\sqrt{s}||\delta_rX_r||_{\mathbb{P}_n,2}}{\kappa_{2\tilde{c}}}-||\delta_{r,T_r^c}||_1)\right).
\end{align*}
Using
\begin{align*}
\hat{Q}_r^{1/2}(\hat{\beta}_r)\le\hat{Q}_r^{1/2}(\beta_r^{(1)})+\frac{\lambda}{n}\frac{\sqrt{s}||\delta_rX_r||_{\mathbb{P}_n,2}}{\kappa_{2\tilde{c}}}
\end{align*}
we conclude
\begin{align*}
\|\delta_r X_r\|_{\mathbb{P}_n,2}^2&\lesssim_P 2Q_r^{1/2}(\beta_r^{(1)})||S_r||_\infty||\delta_r||_1\\
&\quad+\left[2\hat{Q}_r^{1/2}(\beta_r^{(1)})+\frac{\lambda}{n}\frac{\sqrt{s}||\delta_r||_{\mathbb{P}_n,2}}{\kappa_{2\tilde{c}}}\right]\frac{\lambda}{n}\left(\frac{\sqrt{s}||\delta_r||_{\mathbb{P}_n,2}}{\kappa_{2\tilde{c}}}-||\delta_{r,T_r^c}||_1)\right)  \\
&\quad+2C_n\|\delta_r X_r\|_{\mathbb{P}_n,2}\\
&\lesssim_P 2\frac{\lambda}{n}\left(Q_r^{1/2}(\beta_r^{(1)})||\delta_r||_1-\hat{Q}_r^{1/2}(\beta_r^{(1)})||\delta_{r,T_r^c}||_1\right)\\
&\quad+2\hat{Q}_r^{1/2}(\beta_r^{(1)})\frac{\lambda}{n}\frac{\sqrt{s}||\delta_rX_r||_{\mathbb{P}_n,2}}{\kappa_{2\tilde{c}}}+\left(\frac{\lambda}{n}\frac{\sqrt{s}||\delta_rX_r||_{\mathbb{P}_n,2}}{\kappa_{2\tilde{c}}}\right)^2+2C_n\|\delta_r X_r\|_{\mathbb{P}_n,2}
\end{align*}
with
\begin{align*}
&\left(Q_r^{1/2}(\beta_r^{(1)})||\delta_r||_1-\hat{Q}_r^{1/2}(\beta_r^{(1)})||\delta_{r,T_r^c}||_1\right)\\
&=\hat{Q}_r^{1/2}(\beta_r^{(1)})||\delta_{r,T_r}||_1+\left(Q_r^{1/2}(\beta_r^{(1)})-\hat{Q}_r^{1/2}(\beta_r^{(1)})\right)||\delta_r||_1\\
&\le\hat{Q}_r^{1/2}(\beta_r^{(1)})||\delta_{r,T_r}||_1+\|\beta_r^{(2)} X_r\|_{\mathbb{P}_n,2}||\delta_r||_1\\
&\lesssim_P\hat{Q}_r^{1/2}(\beta_r^{(1)})||\delta_{r,T_r}||_1+C_n3\tilde{c}||\delta_{r,T_r}||_1.
\end{align*}
With probability $1-o(1)$ we have
\begin{align*}
\|\delta_r X_r\|_{\mathbb{P}_n,2}^2&\lesssim 2\frac{\lambda}{n}||\delta_{r,T_r}||_1\left(\hat{Q}_r^{1/2}(\beta_r^{(1)})+C_n3\bar{c}\right)\\
&\quad+2\hat{Q}_r^{1/2}(\beta_r^{(1)})\frac{\lambda}{n}\frac{\sqrt{s}||\delta_rX_r||_{\mathbb{P}_n,2}}{\kappa_{2\tilde{c}}}+\left(\frac{\lambda}{n}\frac{\sqrt{s}||\delta_rX_r||_{\mathbb{P}_n,2}}{\kappa_{2\tilde{c}}}\right)^2+2C_n\|\delta_r X_r\|_{\mathbb{P}_n,2}\\
&\lesssim 2\frac{\lambda}{n}\frac{\sqrt{s}||\delta_rX_r||_{\mathbb{P}_n,2}}{\kappa_{2\tilde{c}}}\left(\hat{Q}_r^{1/2}(\beta_r^{(1)})+C_n3\bar{c}\right)\\
&\quad+2\hat{Q}_r^{1/2}(\beta_r^{(1)})\frac{\lambda}{n}\frac{\sqrt{s}||\delta_rX_r||_{\mathbb{P}_n,2}}{\kappa_{2\tilde{c}}}+\left(\frac{\lambda}{n}\frac{\sqrt{s}||\delta_rX_r||_{\mathbb{P}_n,2}}{\kappa_{2\tilde{c}}}\right)^2+2C_n\|\delta_r X_r\|_{\mathbb{P}_n,2}
\end{align*}
and therefore obtain
\begin{align*}
\left(1-\left(\frac{\lambda}{n}\frac{\sqrt{s}}{\kappa_{2\tilde{c}}}\right)^2\right)\|\delta_r X_r\|_{\mathbb{P}_n,2}^2&\lesssim_P \bigg(4\hat{Q}_r^{1/2}(\beta_r^{(1)})\frac{\lambda}{n}\frac{\sqrt{s}}{\kappa_{2\tilde{c}}}\\
&\quad+C_n\left(6\tilde{c}\frac{\lambda}{n}\frac{\sqrt{s}}{\kappa_{2\tilde{c}}}+2\right)\bigg)||\delta_rX_r||_{\mathbb{P}_n,2},
\end{align*}
which implies
\begin{align*}
\|\delta_r X_r\|_{\mathbb{P}_n,2}&\lesssim_P \frac{\lambda\sqrt{s}}{n}+C_n\lesssim \sqrt{\frac{s\log(a_n)}{n}}.
\end{align*}
Here we used that 
\begin{align*}
\hat{Q}_r^{1/2}(\beta_r^{(1)})=\E_n[(\varepsilon_r+\beta_r^{(2)}X_r)^2]^{1/2}\le\|\varepsilon_r\|_{\mathbb{P}_n,2}+\|\beta_r^{(2)}X_r\|_{\mathbb{P}_n,2}\lesssim_P C+\bar{\varphi}_n+C_n.
\end{align*}
If $\delta_r\notin\Delta_{2\tilde{c},r}$ (implying $||\delta_{r,T_r^c}||_1> 2\tilde{c}||\delta_{r,T_r}||_1$), (\ref{1norm_support}) directly implies 
$$ 2\tilde{c}||\delta_{r,T_r}||_1\lesssim_P \tilde{c}\|\delta_{r,T_r}\|_1+\frac{n}{\lambda}\frac{c}{c-1}C_n\|\delta_r X_r\|_{\mathbb{P}_n,2}$$
and therefore
$$||\delta_{r,T_r}||_1\lesssim_P \frac{n}{\lambda}\frac{c}{c-1}C_n\|\delta_r X_r\|_{\mathbb{P}_n,2}$$
due to $\tilde{c}\ge 1$. Additionally (\ref{1norm_support}) implies
$$\|\delta_{r,T_r^c}\|_1\lesssim_P \frac{1}{2}\|\delta_{r,T_r^c}\|_1+\frac{n}{\lambda}\frac{c}{c-1}C_n\|\delta_r X_r\|_{\mathbb{P}_n,2}$$
and therefore
$$\|\delta_{r,T_r^c}\|_1\lesssim_P\frac{2n}{\lambda}\frac{c}{c-1}C_n\|\delta_r X_r\|_{\mathbb{P}_n,2},$$
which, combined with the inequality above, implies
$$\|\delta_{r}\|_1\lesssim_P\frac{3n}{\lambda}\frac{c}{c-1}C_n\|\delta_r X_r\|_{\mathbb{P}_n,2}.$$
Using
\begin{align*}
\hat{Q}_r^{1/2}(\hat{\beta}_r)-\hat{Q}_r^{1/2}(\beta_r^{(1)})\le\frac{\lambda}{n}\left(\|\delta_{r,T_r}\|_1-\|\delta_{r,T_r^c}\|_1\right)\le \frac{\lambda}{n}\|\delta_{r}\|_1
\end{align*}
and following the same argument as above we obtain with probability $1-o(1)$:
\begin{align*}
\|\delta_r X_r\|_{\mathbb{P}_n,2}^2&=2\E_n[(Y_r-\beta_r^{(1)}X_r)\delta_rX_r]+[\hat{Q}_r^{1/2}(\hat{\beta}_r)+\hat{Q}_r^{1/2}(\beta_r^{(1)})][\hat{Q}_r^{1/2}(\hat{\beta}_r)- \hat{Q}_r^{1/2}(\beta_r^{(1)})]\\
&\lesssim 2Q_r^{1/2}(\beta_r^{(1)})||S_r||_\infty||\delta_r||_1+2C_n\|\delta_r X_r\|_{\mathbb{P}_n,2}\\
&\quad+\left(2\hat{Q}_r^{1/2}(\beta_r^{(1)})+\frac{\lambda}{n}\|\delta_{r}\|_1\right)\frac{\lambda}{n}\|\delta_{r}\|_1\\
&\lesssim \bigg(2\frac{1}{c}\underbrace{\left(Q_r^{1/2}(\beta_r^{(1)})-\hat{Q}_r^{1/2}(\beta_r^{(1)})\right)}_{\lesssim C_n}+2\left(\frac{1}{c}+1\right)\hat{Q}_r^{1/2}(\beta_r^{(1)})+\frac{\lambda}{n}\|\delta_{r}\|_1\bigg)\frac{\lambda}{n}\|\delta_{r}\|_1\\
&\quad+2C_n\|\delta_r X_r\|_{\mathbb{P}_n,2}\\
&\le 6\left(\frac{C_n}{c}+\left(\frac{1}{c}+1\right)\hat{Q}_r^{1/2}(\beta_r^{(1)})\right)\frac{c}{c-1}C_n\|\delta_r X_r\|_{\mathbb{P}_n,2}\\
&\quad+\left(3\frac{c}{c-1}C_n\|\delta_r X_r\|_{\mathbb{P}_n,2}\right)^2+2C_n\|\delta_r X_r\|_{\mathbb{P}_n,2}.\\
\end{align*}
Hence,
\begin{align*}
\left(1-\left(3\frac{c}{c-1}C_n\right)^2\right)\|\delta_r X_r\|_{\mathbb{P}_n,2}^2&\lesssim_P 6\left(\frac{C_n}{c}+\left(\frac{1}{c}+1\right)\hat{Q}_r^{1/2}(\beta_r^{(1)})\right)\frac{c}{c-1}C_n\|\delta_r X_r\|_{\mathbb{P}_n,2}\\
&\quad +2C_n\|\delta_r X_r\|_{\mathbb{P}_n,2}\\
\end{align*}
which implies
\begin{align*}
\|\delta_r X_r\|_{\mathbb{P}_n,2}&\lesssim_P C_n\lesssim \sqrt{\frac{s\log(a_n)}{n}}.
\end{align*}
To prove the second claim observe that
\begin{align*}
\|\delta_r\|_1&= 1_{\{\delta_r\in\Delta_{2\tilde{c},r}\}}\|\delta_r\|_1+1_{\{\delta_r\notin\Delta_{2\tilde{c},r}\}}\|\delta_r\|_1\\
&\le 1_{\{\delta_r\in\Delta_{2\tilde{c},r}\}}\left(1+2\tilde{c}\right)\|\delta_{r,T_r}\|_1+1_{\{\delta_r\notin\Delta_{2\tilde{c},r}\}}\|\delta_r\|_1\\
&\lesssim_P \left(\left(1+2\tilde{c}\right)\frac{\sqrt{s}}{\kappa_{2\tilde{c}}}+\frac{3n}{\lambda}\frac{c}{c-1}C_n\right)\|\delta_r X_r\|_{\mathbb{P}_n,2}\\
&\lesssim_P \sqrt{\frac{s^2\log(a_n)}{n}}
\end{align*}
uniformly over all $r=1,\dots,d$. Now, we proof that
\begin{align*}
\max\limits_{r=1,\dots,d}\|\hat{\beta}_r\|_0\lesssim s.
\end{align*}
This proof is derived from the proof of lemma L.2. from Belloni et al. (2018) \cite{belloni2018uniformly}. At first observe that
$$0<c\lesssim_P  \min_{r=1,\dots,d}\|\varepsilon_r+\beta^{(2)}_r X_r\|^2_{\mathbb{P}_n,2}\le \max_{r=1,\dots,d}\|\varepsilon_r+\beta^{(2)}_r X_r\|^2_{\mathbb{P}_n,2}\lesssim_P C<\infty$$
where the first inequality is shown above and the second follows with an analogous argument. Additionally we obtain
\begin{align*}
\max_{r=1,\dots,d}\left|\|Y_r-\hat{\beta}_r X_r\|^2_{\mathbb{P}_n,2}-\|\varepsilon_r+\beta^{(2)}_r X_r\|^2_{\mathbb{P}_n,2}\right|\lesssim_P C_n+C_n^2=o(1)
\end{align*}
due to 
\begin{align*}
\|Y_r-\hat{\beta}_r X_r\|^2_{\mathbb{P}_n,2}&=\|\varepsilon_r+\beta^{(2)}_r X_r\|^2_{\mathbb{P}_n,2}-2\E_n[(\varepsilon_r+\beta^{(2)}_r X_r)\delta_rX_r]+\underbrace{\|\delta_rX_r\|^2_{\mathbb{P}_n,2}}_{\lesssim_P C_n^2}
\end{align*}
with
\begin{align*}
|\E_n[(\varepsilon_r+\beta^{(2)}_r X_r)\delta_rX_r]|&\le \sqrt{\E_n[(\varepsilon_r+\beta^{(2)}_r X_r)^2]\E_n[(\delta_rX_r)^2]}\\
&\lesssim \left(C + o_P(1)\right)\|\delta_rX_r\|_{\mathbb{P}_n,2}\\
&\lesssim_P C_n
\end{align*}
uniformly over all $r=1,\dots,d$. This implies
\begin{align*}
&|\delta(\partial_{\beta}\hat{Q}^{1/2}_r(\beta)|_{\beta=\hat{\beta}_r}-\hat{S}_r)|\\
=&\bigg|\delta\left(\frac{\E_n[X_r(Y_r-\beta_r^{(1)} X_r)]}{\sqrt{\E_n[(Y_r-\beta_r^{(1)} X_r)^2]}}-\frac{\E_n[X_r(Y_r-\hat{\beta}_r X_r)]}{\sqrt{\E_n[(Y_r-\hat{\beta}_r X_r)^2]}}\right)\bigg|\\
=& \bigg|\delta\left(\frac{\E_n[X_r(Y_r-\beta_r^{(1)} X_r)]\|Y_r-\hat{\beta}_r X_r\|_{\mathbb{P}_n,2}-\|\varepsilon_r+\beta^{(2)}_r X_r\|_{\mathbb{P}_n,2}\E_n[X_r(Y_r-\hat{\beta}_r X_r)]}{\|\varepsilon_r+\beta^{(2)}_r X_r\|_{\mathbb{P}_n,2}\|Y_r-\hat{\beta}_r X_r\|_{\mathbb{P}_n,2}}\right)\bigg|\\
\lesssim_P & \bigg|\delta \left(\E_n[X_r(Y_r-\beta_r^{(1)} X_r)]-\E_n[X_r(Y_r-\hat{\beta}_r X_r)]\right)\bigg|\\
\le & \|\delta_r X_r\|_{\mathbb{P}_n,2}\|\delta X_r\|_{\mathbb{P}_n,2}\lesssim_P C_n\|\delta X_r\|_{\mathbb{P}_n,2}.
\end{align*}
By the definition of $\hat{\beta_r}$, there exists a subgradient $\partial_{\beta}\hat{Q}^{1/2}_r(\beta)|_{\beta=\hat{\beta}_r}$ of $\hat{Q}^{1/2}_r(\hat{\beta}_r)$ such that for every $j$ with $|\hat{\beta}_{r,j}|>0$
$$|(\partial_{\beta}\hat{Q}^{1/2}_r(\beta)|_{\beta=\hat{\beta}_r})_j|=\frac{\lambda}{n}.$$
Let $\hat{T}_r:=\text{supp}(\hat{\beta}_r)$ and $|\hat{T}_r|:=\hat{s}_r$. We obtain
\begin{align*}
\frac{\lambda}{n}\sqrt{\hat{s}_r}&=\|(\partial_{\beta}\hat{Q}^{1/2}_r(\beta)|_{\beta=\hat{\beta}_r})_{\hat{T}_r}\|_2\\
&\le\|{S_r}_{\hat{T}_r}\|_2+\|(\hat{S}_r-S_r)_{\hat{T}_r}\|_2+\|(\partial_{\beta}\hat{Q}^{1/2}_r(\beta)|_{\beta=\hat{\beta}_r}-\hat{S}_r)_{\hat{T}_r}\|_2\\
&\lesssim_P \sqrt{\hat{s}_r}\|S_r\|_\infty\\
&\quad + C_n\sup_{\|\delta\|_2=1,\|\delta\|_0\le\hat{s}_r}\|\delta X_r\|_{\mathbb{P}_n,2}\\
&\quad + \sup_{\|\delta\|_2=1,\|\delta\|_0\le\hat{s}_r}|\delta(\partial_{\beta}\hat{Q}^{1/2}_r(\beta)|_{\beta=\hat{\beta}_r}-\hat{S}_r)|\\
&\lesssim_P  \sqrt{\hat{s}_r}\frac{\lambda}{nc}+2C_n\sup_{\|\delta\|_2=1,\|\delta\|_0\le\hat{s}_r}\|\delta X_r\|_{\mathbb{P}_n,2},
\end{align*}
where we used
\begin{align*}
\|(\hat{S}_r-S_r)_{\hat{T}_r}\|_2\le \sup_{\|\delta\|_2=1,\|\delta\|_0\le\hat{s}_r}|\delta(\hat{S}_r-S_r)|\lesssim_P C_n \sup_{\|\delta\|_2=1,\|\delta\|_0\le\hat{s}_r}\|\delta X_r\|_{\mathbb{P}_n,2}.
\end{align*}
Hence with probability $1-o(1)$,
\begin{align}\label{hats}
\hat{s}_r\le& \left(\frac{2CnC_n}{\lambda(1-1/c)}\right)^2\sup_{\|\delta\|_2=1,\|\delta\|_0\le\hat{s}_r}\|\delta X_r\|_{\mathbb{P}_n,2}^2\notag \\
\le&\bigg(\underbrace{\frac{2CnC_n}{\lambda(1-1/c)}}_{:=L}\bigg)^2\phi_{max}(\hat{s}_r,r)\lesssim s\phi_{max}(\hat{s}_r,r)
\end{align}
where
$$ \phi_{max}(\hat{s}_r,r):=\max\limits_{\|\delta\|_0\le \hat{s}_r}\frac{\|\delta X_r\|_{\bP_n,2}^2}{\|\delta\|_2^2}.$$
We can find a suitable $C$ such that $M=Cs\in\mathcal{M}_r$ with $$\mathcal{M}_r:=\{m\in\N:m>2\phi_{max}(m,r)L^2\}.$$
Suppose that $\hat{s}_r>M$. By the sublinearity of the maximum sparse eigenvalue (Lemma 3 in \cite{belloni:2013}), for any integer $k\ge 0$ and constant $l\ge 0$, we have 
$$\phi_{max}(lk,r)\le \ceil{l}\phi_{max}(k,r)$$
where $\ceil{l}$ denotes the ceiling of $l$. Since $\ceil{k}\le 2k$ for any $k\ge 1$,
\begin{align*}
\hat{s}_r\le& L^2\phi_{max}(\hat{s}_r,r)=L^2\phi_{max}(M\hat{s}_r/M,r)\\
\le& \ceil[\bigg]{\frac{\hat{s}_r}{M}}L^2\phi_{max}(M,r)\le\frac{2\hat{s}_r}{M}L^2\phi_{max}(M,r)
\end{align*}
that violates the condition that $M\in \mathcal{M}_r$. Therefore, we have $\hat{s}_r\le M$. Applying \ref{hats}, we obtain
\begin{align*}
\max\limits_{r=1,\dots,d}\hat{s}_r\le\max\limits_{r=1,\dots,d}\phi_{max}(M,r)s\lesssim s
\end{align*}
with probability $1-o(1)$ and the stated claim follows:
\begin{align*}
\max\limits_{r=1,\dots,d}\|\hat{\beta}_r\|_0\lesssim s.
\end{align*}
Since the maximal sparse eigenvalues are uniformly bounded from above, we conclude
\begin{align*}
\max\limits_{r=1,\dots,d}\|\hat{\beta}_r-\beta_r^{(1)}\|_2 \lesssim \max\limits_{r=1,\dots,d}\|(\hat{\beta}_r-\beta^{(1)}_r)X_r\|_{\bP_n,2}\lesssim C_n
\end{align*}
with probability at least $1-o(1)$.
\end{proof}

\footnotesize

\pagebreak
\bibliographystyle{imsart-number}
\bibliography{Literatur_GGM}

\end{document}